\documentclass[prl,twocolumn,longbibliography,nofootinbib, preprintnumbers,notitlepage,fleqn,showpacs]{revtex4-1}

\usepackage{color}

\usepackage[bookmarks = true, citecolor = blue, colorlinks = true, linkcolor = magenta, urlcolor = blue]{hyperref}

\usepackage[T1]{fontenc}			
\usepackage[sc,osf]{mathpazo}   	

\usepackage{amsmath}  			
\usepackage{amsfonts}  			
\usepackage{graphicx}   			
\usepackage{mathrsfs, amsthm, amssymb}
\usepackage{bbm, bm}

\usepackage{nicefrac}    			

\usepackage[braket, qm]{qcircuit}	

\newtheorem{theorem}{Theorem}
\newtheorem{definition}{Definition}

\newtheorem{lemma}{Lemma}


\begin{document}


\title[]{Is single-particle interference spooky?}

\author{Pawel \surname{Blasiak}}
\email{Pawel.Blasiak@ifj.edu.pl}
\affiliation{Institute of Nuclear Physics Polish Academy of Sciences, PL-31342 Krak\'ow, Poland}


\begin{abstract}
It is said about quantum interference that \emph{"In reality, it contains the only mystery"}. Indeed, together with non-locality it is often considered as the characteristic feature of quantum theory which can not be explained in any classical way. In this work we are concerned with a restricted setting of a single particle propagating in multi-path interferometric circuits, that is physical realisation of a qudit. It is shown that this framework, including collapse of the wave function, can be simulated with classical resources without violating the locality principle. We present a \emph{local ontological model} whose predictions are indistinguishable from the quantum case. 'Non-locality' in the model appears merely as an epistemic effect arising on the level of description by agents whose knowledge is incomplete. This result suggests that the real quantum mystery should be sought in the multi-particle behaviour, since single-particle interferometric phenomena are explicable in a classical manner.
\end{abstract}

\pacs{03.65.Ta, 03.65.Ud}

\maketitle

In the Feynman Lectures on Physics quantum interference is described as \emph{"a phenomenon which is impossible, \emph{absolutely} impossible, to explain in any classical way, and which has in it the heart of quantum mechanics"}~\cite{FeLeSa65}. Broadly speaking, the phenomenon concerns behaviour of a particle in the interferometric circuits and the problem consists in reconciling wave and particle character of the phenomenon. Another difficulty is a common-sense explanation of the collapse of the wave function upon measurement. In some mysterious way behaviour of the quantum particle depends on the knowledge of what is happening in the distant parts of the experimental setup. Notably, non-locality of the collapse manifests itself already in the single-particle scenarios, as first pointed out by A. Einstein during the Fifth Solvay Conference~\cite{BaVa09} who metaphorically called such an influence \emph{"spooky action-at-a-distance"}~\cite{EiPoRo35,Ei71}. A fully fledged argument against local realism in quantum theory is due to profound insight of J. S. Bell~\cite{Be66,Be93}. It requires two particles to show non-local correlations between measurements in distant arms of the interferometric setup. Remarkably, all further refinements of the argument exploit properties of entangled states in multi-particle scenarios, see e.g.~\cite{GrHoShZe90,Ha92a,Me93}. This leaves open the question of possible local explanation of quantum interferometric phenomena in the single-particle case, cf.~\cite{TaWaCo91,Ha94,GrHoZe95,DuVe07}.

Quantum mechanics of single-particle phenomena is a rich source of paradoxes and surprising effects which challenge our classical intuition about the world. Apart from quantum interference~\cite{ScSu98}, they include e.g.: interaction-free measurements~\cite{ElVa93,KwWeHeZeKa95,De02}, quantum Zeno effect~\cite{MiSu77,Pe80,KwWeHeZeKa95,De02}, Wheeler's delayed-choice experiment~\cite{Wh78,JaWuGrTrGrAsRo07}, violation of Leggett-Garg inequalities~\cite{LeGa85,EmLaNo14}, pre- and post-selection paradoxes~\cite{AhAlVa88,GeRoMaBlBeMaTw13} and contextuality~\cite{KoSp67,KlCaBiSh08}. These phenomena are often considered as strictly quantum mechanical effects and some of them, like contextuality or Leggett-Garg inequalities, are sometimes treated as signatures of the quantum regime. However, as suggestive it might look it is not at all clear to what extent these features are unique to the quantum realm. On the one hand, there are various models indicating analogies on the grounds of classical probabilistic theories, see e.g.~\cite{Ha99,DaPlPl02,Sp07,KlGuPoLaCa11,BaRuSp12,WaBa12,Bl13,FeCo14,Bl15,Bl15a,KaCaBaRu15}. On the other hand, none of these results fully reconstruct quantum predictions for general single-particle scenarios. All this makes the question about the distinctive quantum features an interesting problem. In particular, it is not clear whether non-locality in the single-particle framework is on a par with the multi-particle case, i.e. does not admit explanation via local hidden variable models~\cite{TaWaCo91,Ha94,GrHoZe95,DuVe07}. A decisive answer would require either a rigorous no-go proof, like the Bell's theorem is for two particles, or a counterexample encompassing all relevant aspects of quantum interferometric setups.

In this paper we are concerned with a single particle propagating in general multi-path interferometric circuits -- that is physical realisation of finite dimensional Hilbert space $\mathcal{H}\!=\!\mathbb{C}^N$ (qudit)~\cite{ReZeBeBe94} -- and explicitly construct local ontological model which faithfully imitates all quantum mechanical predictions. This suggests caution against statements to the effect of non-locality of the collapse of the wave function or \emph{absolute} impossibility of classical explanation of single-particle interferometric phenomena. The model shows that local explanation is conceivable and the real mystery lays in the multi-particle behaviour~\cite{Be66,Be93,GrHoShZe90,Ha92a,Me93}. Analysis of the model illustrates the role of epistemic constraints in description of the system by agents with limited resources, which lead to all kinds of weird quantum-like effects.


\section{Results}
We begin with a brief account of  quantum interferometric circuits. It is meant to introduce the notation and provide a basis for comparison of the model constructed in this paper with the standard quantum mechanical description.

\ \\\noindent\textbf{
Quantum interferometry in a nutshell}.
In the following we consider standard interferometric framework for a single particle propagating through a network of spatially separated paths. Evolution of the system is implemented by gates attached to the paths which represent non-trivial transformations (with empty paths corresponding to free evolution). See Fig.~\ref{Fig-Ontology} (on the right) for illustration. It is enough to consider only a few kinds of gates which form a basis for construction of complex interferometric circuits~\cite{ReZeBeBe94}. These gates include phase shifters $S_j$ and detectors $D_j$ which are attached to individual paths and beam splitters  $B_{st}$ on which two paths cross, with $j$ and $s,t$ indicating the respective paths. A special role of detectors is to provide an outcome \textsc{'Click'}\!/\textsc{'No Click'} which attests to the presence/absence of the particle in a given path. 


Quantum description of a single particle in the interferometric circuit which consist of $N$ paths associates position of the particle with the vectors of computational basis $\ket{1},\dots,\ket{N}$, where $\ket{j}$ represents the fact of particle being in $j$-th path. In general, state of the system is a superposition with complex coefficients $\psi_j$ defining a vector (ray) in $\mathcal{H}=\mathbb{C}^N$, i.e.
\begin{eqnarray}\label{Quant-State}
\ket{\psi}=\sum_{j=1}^N\ \psi_j \ket{j}=\left(\begin{smallmatrix}\psi_1\vspace{-0.1cm}\\\vdots\vspace{0.1cm}\\\psi_N\end{smallmatrix}\right)=\vec{\psi},
\end{eqnarray}
with normalisation $\Vert\!\ket{\psi}\!\Vert^2=\sum_j |\psi_j|^2=1$ and vectors differing be an overall phase being equivalent. Evolution implemented by gates corresponds to a sequence of unitary and projective transformations described as follows. \emph{Free evolution} in $j$-th path acts trivially and \emph{phase shifter} $S_j$ introduces phase $e^{i\omega}$ in the relevant path, i.e.
\begin{eqnarray}\label{Quant-Free+Pj}
\psi_j\ \xymatrix{\ar[r]^{\textit{free}} &}\ \psi_j&\ \ \ \text{and}\ \ \ &\psi_j\ \xymatrix{\ar[r]^{S_j} &}\ e^{i\omega}\,\psi_j\,.
\end{eqnarray}
\emph{Beam splitter} $B_{st}$ located at the crossing of paths $s$ and $t$ implements a unitary in the subspace spanned by kets $\ket{s}$ and $\ket{t}$ given by
\begin{eqnarray}\label{Quant-Bjk}
\begin{pmatrix}\psi_s\\\psi_t\end{pmatrix}\ \xymatrix{\ar[r]^{B_{st}} &}\ \begin{pmatrix}\psi_s'\\\psi_t'\end{pmatrix}=\begin{pmatrix}i\sqrt{R}&\sqrt{T}\\\sqrt{T}&i\sqrt{R}\end{pmatrix}\!\!\begin{pmatrix}\psi_s\\\psi_t\end{pmatrix}\,,
\end{eqnarray}
where $R$,$T$ are reflectivity and transitivity coefficients. 
Finally, according to the measurement postulate (von Neumann--L\"uders rule) \emph{detector} $D_j$ is described by the PVM $\{\mathbb{P}_j\,,\,\mathbb{1}-\mathbb{P}_j\}$ where $\mathbb{P}_j\equiv\ket{j}\!\bra{j}$, i.e. depending on the outcome it effects the projection
\begin{eqnarray}\label{Quant-Dj}
\ket{\psi}\ \xymatrix{\ar[r]^{D_j} &}\ \left\{
\begin{array}{cll}
\ket{j}&&\ \ \text{\textsc{'Click'}}\,,\vspace{0.1cm}\\
\frac{(\mathbb{1}-\mathbb{P}_j)\ket{\psi}}{\Vert(\mathbb{1}-\mathbb{P}_j)\ket{\psi}\Vert}&&\ \ \text{\textsc{'No Click'}}\,,
\end{array}
\right.
\end{eqnarray}
with probability that detector $D_j$ \textsc{'Clicks'} given by the Born rule $\textit{\textsf{Pr}}\,(D_j|\psi)=|\!\ip{j}{\psi}|^2=|\psi_j|^2$. Note that projection postulate Eq.\,(\ref{Quant-Dj}) affects the whole space $\mathcal{H}=\mathbb{C}^N$ in spite of the fact that detector $D_j$ is localised only in $j$-th path. Explanation of this behaviour leads to the notorious problem concerning ontological status of quantum states and the issue of non-locality of the collapse of the wave function.

These rules provide mathematical description of a single particle the interferometric circuit. It was shown in Ref.~\cite{ReZeBeBe94} that any unitary and projective measurement in $\mathcal{H}=\mathbb{C}^N$ can be experimentally realised in a circuit composed of $N$ paths as a sequence of interferometric gates defined above. Thus it provides a convenient physical framework for foundational explorations.


Our main goal in this paper is explicit construction of a classical analogue with the same structural components (comprised of paths and gates arranged into circuits) which mimics quantum behaviour of a particle in the interferometric circuits described above. The crux of the matter  is to provide a model with well-defined underlying ontology which does not violate the locality principle, and yet on the operational level its predictions are indistinguishable form the quantum case.
\ \\\noindent\textbf{Ontology of the model}.
Let us consider circuits composed of $N$ paths labelled with index $j=1,\dots,N$. Defining the model we assume that in the circuit propagates a \emph{single particle} which has well defined \emph{position} $q=1,\dots,N$. Additionally, we postulate that along each path propagates a \emph{local field} characterised by two degrees of freedom: (complex) \emph{amplitude} $u_j$ such that $|u_j|\leqslant1$ and (real) \emph{strength} $\tau_j$ such that $0\leqslant\tau_j\leqslant1$. This means that at each time the system of $N$ paths is fully specified by a point $(q,\vec{u},\vec{\tau})$ in the \emph{ontic state space}
\begin{eqnarray}\nonumber
\Lambda&=&\{q:q=1,\dots,N\}\times\\
&&\{\vec{u}\in\mathbb{C}^N:|u_j|\leqslant1\}\times\\\nonumber
&&\{\vec{\tau}\in\mathbb{R}^N:0\leqslant\tau_j\leqslant1\}\,,
\end{eqnarray}
where $u_j$ and $\tau_j$ describe the field in $j$-th path. See Fig.~\ref{Fig-Ontology}.

In the following we will be interested in stochastic evolution which requires probabilistic description and hence consider the set of all possible probability distributions over the ontic states\begin{eqnarray}
\!\!\!\mathcal{P}(\Lambda)=\Big\{\,\bm{p}:\Lambda\longrightarrow[0,1]\,:\,\int_{\Lambda}\bm{p}(\lambda)\,d\lambda=1\,\Big\},
\end{eqnarray}
which will be called \emph{epistemic state space}.
A general stochastic \emph{transformation (or gate)} is defined as a mapping $T:\Lambda\longrightarrow\mathcal{P}(\Lambda)$, where $T(\lambda)$ specifies distribution of final states given the system was in state $\lambda\in\Lambda$. In the model we will be concerned with a limited choice of transformations (gates) which are described below.

\begin{figure*}[t]
\begin{center}
\includegraphics[width=0.8\textwidth]{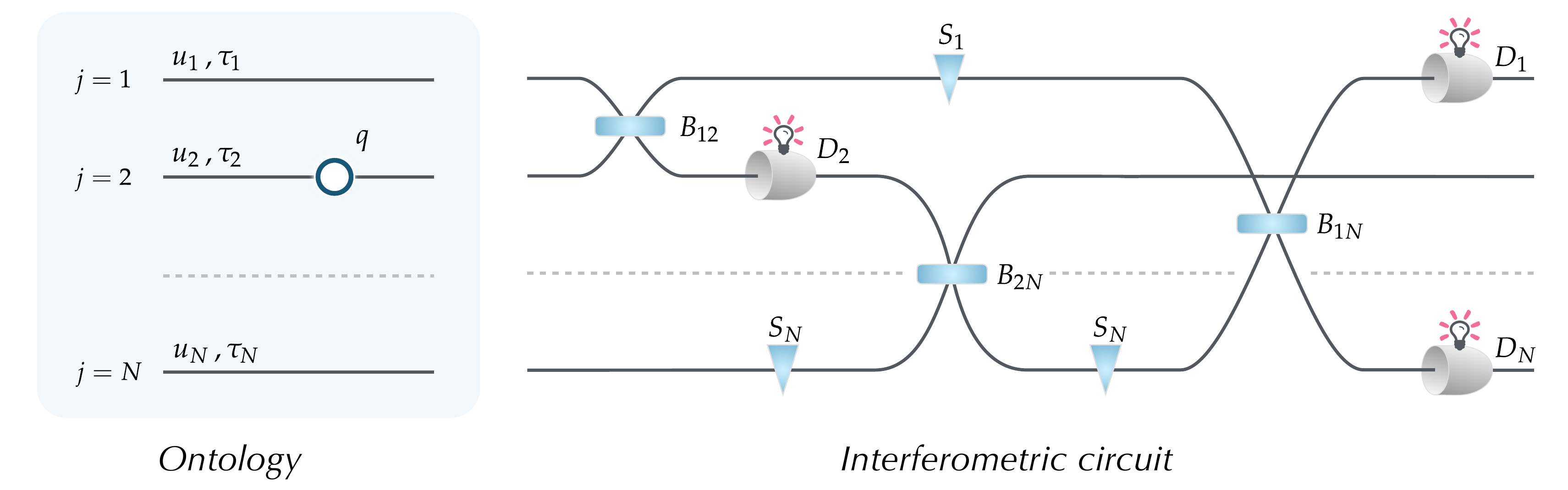}
\end{center}
\caption{\label{Fig-Ontology}{\bf Ontology of the model and interferometric circuits.} One the left, ontology of the model consists of a \emph{single particle} and \emph{local fields} propagating in each path of the circuit. At each time the particle has well-defined \emph{position} $q=1,\dots,N$ and the fields are characterised by \emph{amplitude} $u_j$ and \emph{strength} $\tau_j$ with $j=1,\dots,N$ labelling the paths. On the right, circuits describe propagation of a particle through a network of (spatially separated) paths and gates which represent a sequence of transformations. Basic interferometric toolkit consists of \emph{free evolution} (empty path), \emph{phase shifters} $S_j$, \emph{beam splitters} $B_{st}$ (on which two paths meet) and \emph{detectors} $D_j$ which inform (\textsc{'Click'}/\textsc{'No Click'}) about the presence/absence of a particle in a given path. This selection of gates is general enough to provide physical realisation of any unitary and projective measurement described by quantum formalism in $\mathcal{H}=\mathbb{C}^N$~\cite{ReZeBeBe94}. In this paper we show that the outlined ontology (on the left) completed with appropriately defined local stochastic  gates fully reconstructs quantum mechanical predictions for a single particle in the interferometric circuits.}
\end{figure*}

\ \\\noindent\textbf{Local interferometric gates}.
For such defined ontology we need to define stochastic counterparts of the interferometric gates. Note that in order to obey the locality principle action of the gates should be restricted to the paths they are attached to, i.e. modify degrees of freedom only in the respective paths and the effected transformation being not dependent on the situation (configuration of gates, outcomes or fields) in the other paths. 

We start with description of paths without gates which correspond to  \emph{free evolution}. It will be assumed that the field in such a path is subject to \emph{'natural ageing'}, namely at each step its strength decreases and amplitude remains unchanged. We make the following definition of \emph{free evolution} in $j$-th path:
\begin{eqnarray}\label{Gates-Free}
u_j\stackrel{free}{\longrightarrow} u_j&\ \ \ \text{\&}\ \ \ &\tau_j\stackrel{free}{\longrightarrow}\nicefrac{\tau_j\,}{2}\,.
\end{eqnarray}

\emph{Phase shifter} $S_j$ is a deterministic gate which acts in {$j$-th} path by rotating phase of the field by $e^{i\omega}$ and strength \emph{'ageing naturally'}, i.e. we have:
\begin{eqnarray}\label{Gates-P}
u_j\stackrel{S_{j}}{\longrightarrow} e^{i\omega}u_j&\ \ \ \text{\&}\ \ \ &\tau_j\stackrel{S_{j}}{\longrightarrow}\nicefrac{\tau_j\,}{2}\,,
\end{eqnarray}

\emph{Detector} $D_j$ checks for presence of the particle in $j$-th path (i.e. detector \textsc{'Clicks'} only if $q=j$). Furthermore, we postulate that the detection depending on the result (\textsc{'Click'}/\textsc{'No Click'}) modifies amplitude and strength of the field in $j$-th path in the following way:
\begin{eqnarray}\label{Gates-D}
u_j\stackrel{D_{j}}{\longrightarrow} \left\{\begin{array}{l}1\vspace{0.05cm}\\u_j\end{array}\right.&\ \ \ \text{\&}\ \ \ &\tau_j\longrightarrow\left\{\begin{array}{ll}1&\ \ \ \ \ \ \text{if}\ \ q=j\vspace{0.05cm}\\0&\ \ \ \ \ \ \text{if}\ \ q\neq j\end{array}\right..
\end{eqnarray}

In the above definitions it is implicitly assumed that the particle \emph{can not} jump between the paths. In other words, if the particle happens to be in path $q=j$, then it stays there $q\longrightarrow q$, and otherwise for $q\neq j$ it remains outside $q\longrightarrow q\neq j$. The particle may change its location only at the crossing points, i.e. where the the beam splitters are placed.

\emph{Beam splitter} $B_{st}$ is a gate which brings paths $s$ and $t$ together and implements the following transformation. Amplitude and strength of the fields are modified according to the recipe:
\begin{eqnarray}\label{Gates-B-u}
\begin{pmatrix}u_s\\u_t\end{pmatrix}\stackrel{B_{st}}{\longrightarrow}\begin{pmatrix}u_s'\\u_t'\end{pmatrix}=\begin{pmatrix}i\sqrt{R}&\sqrt{T}\\\sqrt{T}&i\sqrt{R}\end{pmatrix}\!\begin{pmatrix}\delta_{\tau_s\tau^{\text{($st$)}}}&\!\!0\\0&\!\!\delta_{\tau_t\tau^{\text{($st$)}}}\end{pmatrix}\!\begin{pmatrix}u_j\\u_k\end{pmatrix},\ \ \ \ 
\end{eqnarray}
and
\begin{eqnarray}\label{Gates-B-tau}
\!\!\!\!\tau_s\,,\tau_t\stackrel{B_{st}}{\longrightarrow}\,\nicefrac{\tau^{\text{($st$)}}}{2}\,,&\ \text{where}&\tau^{\text{($st$)}}=\max\,\{\tau_s\,,\tau_t\}\,.
\end{eqnarray}
In plain words, the role of $\delta$'s in the diagonal matrix in Eq.\,(\ref{Gates-B-u}) is to suppresses the field with weaker strength so that it does not contribute to the transformed amplitudes at the output. Note that strengths of the outgoing fields are subsequently levelled up to $\nicefrac{\tau^{\text{($st$)}}}{2}$; see Eq.\,(\ref{Gates-B-tau}). Additionally, if the particle happens to be in one of the crossing paths, i.e. $q=s$ or $q=t$, then it may change its position following the probabilistic rule:
\begin{eqnarray}\label{Gates-B-q}
q\stackrel{B_{st}}{\longrightarrow} \Bigg\{\begin{array}{ll}q'=s&\,\text{with probability\ \ }\tfrac{|u_s'|^2}{|u_s'|^2+|u_t'|^2}\,,\vspace{0.1cm}\\q'=t&\,\text{with probability\ \ }\ \tfrac{|u_t'|^2}{|u_s'|^2+|u_t'|^2}\,.\end{array}
\end{eqnarray}
and otherwise, for $q\neq s$ and $q\neq t$, it remains outside.

All gates defined above are \emph{local} (with the interaction between the paths on the beam splitter allowed since it is placed at the crossing point). We also note that transformations effected by free evolution, phase shifters $S_j$ and detectors $D_j$ are \emph{deterministic}, while the beam splitters $B_{st}$ are non-trivial \emph{stochastic} gates.

Observe that the structure of circuits constructed in the model is analogous to those in the quantum interferometric framework. The difference lays in the underlying ontology which in the presented model is given explicitly with locality being built in from the outset. We will show that statistical predictions for any experimental circuit in the model are the same as for its quantum-mechanical counterpart.

\ \\\noindent\textbf{Operational desideratum}.
Imagine agent without any prior knowledge of the model making an effort to understand how it works only by analysing results of experiments that she can perform. Clearly, her conception of the model may diverge from the \emph{'true'} ontology described above, since her choice of gates in constructing experimental circuits is constrained. In the following we are interested in forming minimal account of the model as seen by the agent  avoiding any unfounded interpretational commitments. It is thus appropriate to adopt \emph{operational} approach and restrict attention solely to description of experimental predictions in the circuits built according to the rules of the model.

For this purpose we need to identify what information is actually available to the agent subject to this kind of constraints. The following questions provide guidance in this process:
\begin{itemize}
\item [$(i)$]\emph{Which distributions in $\mathcal{P}(\Lambda)$ can be prepared by the agent with limited resources at hand?}
\end{itemize}
In general, it may be the case that the agent explores only a restricted range of distributions in $\mathcal{P}(\Lambda)$, meaning that some distributions are beyond her reach. Then it is natural to ask:
\begin{itemize}
\item [$(ii)$]\emph{How do these distributions transform under action of the gates in the model?}
\end{itemize}
What remains is to abstract away redundant ontological structure. Here is the key to the operational account:
\begin{itemize}
\item []\emph{\textbf{Operational indifference principle:}
Distributions that are not distinguishable by means available to the agent, that is give the same probabilistic predictions for any conceivable experiment (circuit), are equivalent from the operational point of view.}
\end{itemize}
It allows to discard ontological details which are irrelevant (or inaccessible) to the agent by treating all indistinguishable distributions as a single entity. At this point one should be able to identify the underlying mathematical framework and answer the question:
\begin{itemize}
\item [$(iii)$]\emph{What is the minimal operational account which correctly describes predictions of the model?}
\end{itemize}

In short, we seek for the bare-bone description without preference to any particular interpretation, with the only purpose to provide a tool for prediction of experimental results. Such an account should specify the set of possible operational states which correspond to inequivalent preparation procedures  and provide transformation rules describing evolution in conceivable experimental circuits (including measurement outcomes). In the following, we show how to construct such an operational account of the model which makes no reference to the underlying ontology.

\ \\\noindent\textbf{Reconstruction of quantum predictions}.
Closer analysis of the model reveals significance of special classes of distributions $[\,\vec{z}\,]\subset\mathcal{P}(\Lambda)$ which can be labeled with complex vectors (rays) $\vec{z}\in\mathbb{C}^n$, that is
\begin{eqnarray}\label{Quant-State}
\vec{z}=\sum_{j=1}^N\ z_j\,\bm{e}_j=\left(\begin{smallmatrix}z_1\vspace{-0.1cm}\\\vdots\vspace{0.1cm}\\z_N\end{smallmatrix}\right),
\end{eqnarray}
with normalisation $\Vert\vec{z}\Vert=\sum_j|z_j|^2=1$ and equivalence up to the overall phase.
These classes can be shown to form disjoint family of subsets in $\mathcal{P}(\Lambda)$, i.e. we have
\begin{eqnarray}\label{[z]-disjoint}
[\,\vec{z}\,]\cap[\,\vec{z}\,']\neq\varnothing\ \ &\Leftrightarrow&\ \ \vec{z}=\vec{z}\,'\ \text{(up to phase)}\,.
\end{eqnarray}
See Section Methods for explicit definitions and Fig.~\ref{Fig-Classes} for illustration.

Interest in these very special classes of distributions $[\,\vec{z}\,]\subset\mathcal{P}(\Lambda)$ is due to their behaviour under action of the gates defined in the model. Let us summarise main results relevant for the discussion of the operational account of the model (see Section Methods). Firstly, it can be shown that free evolution, phase shifters, beam splitters  and detectors (with post-selection) act \emph{congruently} on such defined family of classes, i.e. all distributions in a given class are mapped into distributions in some other class $[\,\vec{z}\,]\ni \bm{p}\longrightarrow \bm{p}'\in[\,\vec{z}\,']$; see Theorem~\ref{Theorem}. Secondly, one observes that available preparation procedures make the agent start off with distributions contained in one of the \emph{initial classes} $[\,\bm{e}_1\,]\,,\dots,[\,\bm{e}_N\,]$, where $\bm{e}_j=(0,\dots,1\,,\dots,0)^T$ has single 1 in $j$-th position which indicates position of the particle ('\textsc{Click}') ascertained by the initial preparation; see Eq.\,(\ref{Initial[z]}).

Combining these facts together provides answer to questions $(i)$ and $(ii)$ from the operational desideratum discussed  above. Since the family of classes is closed under available transformation, we infer that the agent with a limited choice of gates at command remains confined in her explorations to a restricted subset of distributions in $\mathcal{P}(\Lambda)$ given by the union of all classes, i.e.
\begin{eqnarray}
\bigcup\Big\{\,[\,\vec{z}\,]\,:\,\vec{z}\in\mathbb{C}^N\,,\,\Vert\vec{z}\Vert=1\,\Big\}\subsetneq\mathcal{P}(\Lambda)\,.
\end{eqnarray}
Note that this set has natural coarse-graining (partitioning) into classes $[\,\vec{z}\,]$ which have the property that action of the gates in the model is concisely described as transformation of the labelling vectors $\vec{z}\longrightarrow\vec{z}\,'$. A crucial observation is that on the level of classes these transformation rules are exactly the same as for the quantum interferometric gates; cf. Eqs.~(\ref{Quant-Free+Pj})--(\ref{Quant-Dj}) and Eqs.~(\ref{Thm-Free})--(\ref{Thm-B}) in Theorem~\ref{Theorem}.

Such a coarse-grained description is just enough for our purposes. This is because distributions in the same class $[\,\vec{z}\,]$ give identical measurement predictions, i.e. probability of a '\textsc{Click}' in detector $D_j$ is equal to $|z_j|^2$. Moreover, since classes transform as a whole there is no way to differentiate by the agent between two distributions in the same class by arranging any complicated circuit from the gates available in the model. This allows to make use of the \emph{operational indifference principle} and observe that all information relevant for predicting behaviour of the system is held by the class itself, that is knowledge of a particular distribution in $[\,\vec{z}\,]$ is redundant. It means that label $\vec{z}$ plays the role of operational state which encodes complete information available to the agent, thereby answering question $(iii)$ from the operational desideratum discussed above. Notice that we get full analogy with the quantum description of interferometric circuits, i.e. we have the same structure (geometry)  of states  which are complex vectors (rays) in $\mathbb{C}^N$ with identical transformation and measurement rules given in Eqs.~(\ref{Quant-Free+Pj})--(\ref{Quant-Dj}) and Eqs.~(\ref{Thm-Free})--(\ref{Thm-B}) respectively. All things considered, both descriptions are equivalent and thus we can identify
\begin{eqnarray}\label{equiv}
\ket{\psi}\ \stackrel{equiv.}{\leftrightsquigarrow}\ [\,\vec{z}\,]&&\ \ \ (\,\text{or}\ \  \vec{\psi}\ \stackrel{equiv.}{\leftrightsquigarrow}\ \vec{z}\,)\,.
\end{eqnarray}
 
In conclusion, operational account of the model boils down to specification of a state given by a complex vector (ray) $\vec{z}\in\mathbb{C}^N$ with the transformation rules and statistics of outcomes ('\textsc{Clicks}') being the same as for the quantum gates. This means that from the perspective of an agent unaware or indifferent to the underlying ontology the behaviour of quantum-interferometric circuits and their counterparts in the presented model are for all practical purposes 
indistinguishable.

\section{Discussion}

In summary, we have constructed local ontological model which faithfully imitates quantum predictions for a single particle in the interferometric circuits. Crucial for the analysis of the model is distinction between two levels of description. On the one hand, we have \emph{ontological description} by an omniscient observer having access to all details of the model, i.e. seeing structure of the ontic state space and familiar with construction of the gates. On the other hand, we have \emph{epistemic description} concerned only with the information which is actually available. The latter adopts \emph{operational perspective} of an agent unaware of the underlying ontology and investigating the system only with the tools at hand, i.e. building interferometric circuits and analysing experimental results (statistics of '\textsc{Clicks}'). We have shown that operational predictions of the constructed model are indistinguishable from the quantum mechanical behaviour. This illustrates that properly chosen constraints on gaining knowledge can modify the picture on the epistemic level. In our case, from the local ontology with classical probabilistic description in $\mathcal{P}(\Lambda)$ we see emergent geometry of the projective space $\mathcal{H}=\mathbb{C}^N$ and quantum mechanical account of a qudit~\cite{BeZy06}.

This result is an explicit counterexample showing impossibility of proving non-locality for a single particle in the interferometric setups. For the sake of clarity, we address the question of (non-)locality in a different context than the Bell-type scenario; the latter is concerned with correlations between measurements on a pair of quantum particles, whereas here we are concerned with a single quantum particle interacting with classical apparatus (phase shifters, beam-splitters and detectors) as described by quantum theory in $\mathcal{H}=\mathbb{C}^N$~\cite{ReZeBeBe94}.

At first sight our conclusion seems to contradict proofs claiming non-locality of a single particle~\cite{TaWaCo91,Ha94,GrHoZe95,DuVe07}. We note that these arguments exploit additional quantum resource, namely coherent states whose properties rely on superposition of multi-particle states. This requires presence of other particles in the system making the claim of single-particle character of the considered phenomena open to question~\cite{GrHoZe95}. A similar objection applies to recent demonstration of the collapse of the wave function using homodyne detection~\cite{FuTaZwWiFu15}. In view of the presented model, these proofs seem to illustrate non-trivial aspect of \emph{'almost'} classical resource provided by local oscillators (coherent states), as compared with \emph{'clean'} single-particle scenarios considered in this paper.

We note that the single-particle framework is a rich source of paradoxes and weird phenomena which are often considered as typically quantum effects without classical explanation~\cite{ScSu98,ElVa93,KwWeHeZeKa95,De02,MiSu77,Pe80,KwWeHeZeKa95,De02,Wh78,JaWuGrTrGrAsRo07,LeGa85,EmLaNo14,AhAlVa88,GeRoMaBlBeMaTw13,KoSp67,KlCaBiSh08}. The latter assertion should be treated with caution, since any argument for non-classicality of an effect always depends on additional assumptions whose plausibility should be properly assessed. For example, interaction-free measurements assume null effect of negative measurement results~\cite{ElVa93,KwWeHeZeKa95,De02}, Leggett-Garg inequalities require non-invasive measurements, pre- and post-selection paradoxes 
rest upon contextual effects~\cite{LeSp05,Ma12,Pu14}, etc. Our model illustrates non-trivial aspect of these assumptions and shows that mere local state disturbance by detectors ushers in a possibility for classical-like explanation of single-particle phenomena. A strong point of the model is that the presented ontology is made ready for any kind of circuit with arbitrary number of paths. As such, it provides exhaustive reconstruction of single-particle phenomena in a unified framework as opposed to separate models devised for simulation of particular effects, cf.~\cite{Ha99,DaPlPl02,Sp07,KlGuPoLaCa11,BaRuSp12,WaBa12,Bl13,FeCo14,Bl15,Bl15a,KaCaBaRu15}.

Let us remark that in the classification of Harrighan-Spekkens~\cite{HaSp10} our construction is $\psi$-ontic, that is distributions corresponding different quantum states have non-overlapping supports. 
We should also point out that the model allows for different representations of the the same quantum state, that is any distribution $\bm{p}\in[\,\vec{z}\,]$ is a valid representation of the same state $\ket{\psi}$ (with the identification $\vec{\psi}\!\leftrightsquigarrow\!\vec{z}\,$). This variety is necessary to accommodate contextual effects which abound in the quantum regime~\cite{KoSp67,KlCaBiSh08,Sp05,HaRu07}.

To give a broader perspective we hasten to note that there is only a handful of ontological models which reconstruct a qudit. One of them is the $\psi$-ontic model by Beltrametti-Bugajski~\cite{BeBu95} which is essentially restatement of the standard Copenhagen interpretation (where non-locality of the collapse of the wave function is built in from the outset). There is also an interesting proposal by Lewis-Jennings-Barrett-Rudolph~\cite{LeJeBaRu12} built within the framework of $\psi$-epistemic models. 
It is explicitly non-local and, in addition, violates the so called preparation independence principle -- the latter seems to be a generic feature of any successful $\psi$-epistemic approach, see~\cite{PuBaRu12,Le14a}. One should also mention the de Broglie-Bohm interpretation of quantum mechanics~\cite{Bo52,CuFiGo96} which postulates local guidance of particles by a quantum potential. For a single particle quantum potential (directly related to the wave function) lives in a 3D space and its dependence on the configuration of the apparatus is a source of non-local effects. Additionally, the de Broglie-Bohm model has many weird features, such as complicated spatial description, 'surrealistic' trajectories~\cite{EnScSuWa92} or excessive contextual effects~\cite{Ha96a}, which persist even in the simple interferometric setups whose relevant degrees of freedom reduce to a qudit. In summary, all these models have built in non-local effects in the description and therefore do not make a case against non-locality of single-particle interferometry discussed 
in this paper.

To conclude, let us quote E. T. Jaynes~\cite{Ja90} on the current understanding of quantum mechanical formalism: \emph{"But our present QM formalism is not purely epistemological; it is a peculiar mixture describing in part realities of Nature, in part incomplete human information about Nature -- all scrambled up by Heisenberg and Bohr into an omelette that nobody has seen how to unscramble. Yet we think that the unscrambling is a prerequisite for any further advance in basic physical theory. For, if we cannot separate the subjective and objective aspects of the formalism, we cannot know what we are talking about; it is just that simple."} In this spirit, our model is an illustration of the idea that careful distinction between the epistemic aspect of the description and the underlying ontological account provides a way of understanding weird quantum phenomena as an effect of incomplete knowledge -- it is tenable at least for single-particle framework as the model demonstrates. This gives support to the belief that \emph{unscrambling of the quantum omelette} should be in principle possible, albeit it is not evident at the moment how to construct such a theory. It seems that non-local effects should play a role in the full reconstruction -- as the Bell's theorem suggests -- however, it is not clear to what extent and in what form (see~\cite{ToBa03} for some hints). The presented model points out to the multi-particle behaviour as the real source of the quantum mystery in comparison with the single-particle phenomena which are less problematic in this respect. In particular, we have shown that single-particle framework is not enough to establish non-locality, since in this case \emph{'spooky action-at-a-disstance'} can be understood as merely an effect of description on the epistemic level.


\section{Methods}

Here we give all necessary definitions and state our main result which describes structure of distributions within the reach of an agent exploring the model (see Supplementary Information for the proof).

\ \\\noindent\textbf{Classes of distributions in $\bm{\mathcal{P}(\Lambda)}$}.
Crucial for the analysis of the model is the following construction of distinguished classes of distributions in $\mathcal{P}(\Lambda)$; see Fig.~\ref{Fig-Classes} for illustration. 
\begin{figure*}[t]
\begin{center}
\includegraphics[width=1\textwidth]{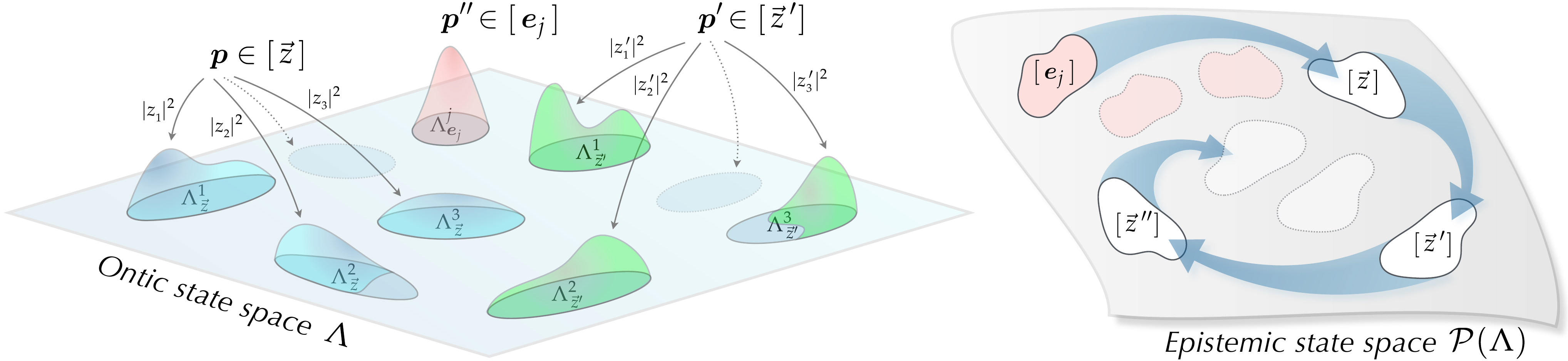}
\end{center}
\caption{\label{Fig-Classes}{\bf Construction of classes of interest in $\mathcal{P}(\Lambda)$.} On the left, distributions $\bm{p}\in[\,\vec{z}\,]$ are defined to have support in $\bigcup_{i=1}^N\Lambda_{\,\vec{z}}^i$ with cumulative probability over the respective subsets $\Lambda_{\,\vec{z}}^i$ equal to  $|z_i|^2$. Since all $\Lambda_{\,\vec{z}}^i$ are disjoint,  distributions in different classes have non-overlapping supports (see $\bm{p}\in[\,\vec{z}\,]$ and $\bm{p}'\in[\,\vec{z}\,']$). 
\vspace{-0.1cm}
Classes $[\,\bm{e}_j\,]$ for $j=1,\dots,N$ are special: each is comprised of distributions supported in a single subset $\Lambda_{\,\bm{e}_j}^j$ (see $\bm{p}''\in[\,\bm{e}_j\,]$) and describes initial preparations with the particle in a given path. On the right, illustration of the whole space of probability distributions over $\Lambda$, denoted by $\mathcal{P}(\Lambda)$,
\vspace{0.05cm}
with disjoint subsets representing classes of interest $[\,\vec{z}\,]$. These classes transform congruently
\vspace{0.05cm}
(as a whole) under action of the gates in the model as explained in Theorem~\ref{Theorem} (in the picture initial class undergoes sequence of transformations $[\,\bm{e}_j\,]\rightarrow[\,\vec{z}\,]\rightarrow[\,\vec{z}\,']\rightarrow[\,\vec{z}\,'']\rightarrow\dots$).}
\end{figure*}

\noindent\textbf{\emph{Step 1:}} Let us consider special subsets of the ontic state space  $\Lambda_{\,\vec{z}}^i\subset\Lambda$ labeled by integers $i\in\{1,\dots,N\}$ and complex vectors $\vec{z}\in\mathbb{C}^N$ defined as follows:
\begin{eqnarray}\label{Lambda_zi}
(q,\vec{u},\vec{\tau})\in\Lambda_{\,\vec{z}}^i&\ \stackrel{df}{\Longleftrightarrow}\ & \left\{
\begin{array}{ll}
a)\ &q=i\vspace{0.1cm}\\
b)\ &\tau_i=\tau>0\vspace{0.1cm}\\
c)\ &\Delta_\tau\,\vec{u}\sim\,\vec{z}
\end{array}\right.
\end{eqnarray}
where $\Delta_\tau\,\vec{u}$ is a vector obtained from $\vec{u}$ by retaining field amplitudes corresponding to the highest field strength $\tau\!:=\max\,\{\tau_1\,,\dots,\tau_N\}$ and the remaining ones put equal to zero. Hence the role of the diagonal matrix
\begin{eqnarray}
\Delta_\tau\!:=\text{diag}\,(\delta_{\tau_1\tau}\,,\dots,\,\delta_{\tau_N\tau})\,,
\end{eqnarray}
which picks out those entries of $\vec{u}$ which correspond to the highest strength $\tau$. In our notation symbol '\!$\sim$' stands for proportionality, i.e. $\vec{z}\sim\vec{z}\,'$ iff $\vec{z}=\alpha\,\vec{z}\,'$ for some $\alpha\in\mathbb{C},\ \alpha\neq0$. In plain words, these conditions express the following requirements:\vspace{-0.1cm}
\begin{itemize}
\item[$a)$]{particle is present in path $i$,\vspace{-0.1cm}}
\item[$b)$]{field in path $i$ has highest strength (non-vanishing; with possibility of equal strengths in other paths),\vspace{-0.1cm}}
\item[$c)$]{vector of field amplitudes with highest strengths $\Delta_\tau\,\vec{u}$ is proportional to $\vec{z}$.\vspace{-0.1cm}}
\end{itemize}
Clearly, for different labels $i$ and $\vec{z}$ (up to proportionality) these subsets are disjoint, i.e. we have
\begin{eqnarray}\label{ij-disjoint}
\Lambda_{\,\vec{z}}^i\,\cap\Lambda_{\,\vec{z}\,'}^j\neq\varnothing\ \ &\ \Leftrightarrow\ &\ \ i= j\ \ \ \&\ \ \ \vec{z}\sim\vec{z}\,'\,.
\end{eqnarray}

\noindent\textbf{\emph{Step 2:}} Then, we introduce auxiliary classes of probability distributions with support in $\Lambda_{\,\vec{z}}^i$ and denote
\begin{eqnarray}\label{[z]i}
[\,\vec{z}\,]_i:=\Big\{\bm{p}\in\mathcal{P}(\Lambda):\,\text{supp}\,\bm{p}\subset\Lambda_{\,\vec{z}}^i\Big\}\subset\mathcal{P}(\Lambda)\,.
\end{eqnarray}
By virtue of Eq.\,(\ref{ij-disjoint}) these classes form a disjoint family of subsets in $\mathcal{P}(\Lambda)$, i.e. we have
\begin{eqnarray}\label{[z]i-disjoint}
[\,\vec{z}\,]_i\cap[\,\vec{z}\,']_j\neq\varnothing\ \ &\ \Leftrightarrow\ &\ \ i= j\ \ \ \&\ \ \ \vec{z}\sim\vec{z}\,'\,.\end{eqnarray}

\noindent\textbf{\emph{Step 3:}} Now, we are ready to define classes of distributions which play a central role in analysis of the model.
\begin{definition}\label{Definition}
With each normalised vector $\vec{z}\in\mathbb{C}^N$, such that $\Vert\vec{z}\Vert:=\sum_{i=1}^N|z_i|^2=1$, we associate the following class of probability distributions:
\begin{eqnarray}\label{[z]}
[\,\vec{z}\,]:=\Big\{\ \sum_{i=1}^N|z_i|^2\,\bm{p}_i:\,\bm{p}_i\in[\,\vec{z}\,]_i\ \Big\}\subset\mathcal{P}(\Lambda)\,.
\end{eqnarray}
\end{definition}
\noindent See Fig.~\ref{Fig-Classes} for illustration. This means that distributions in $[\,\vec{z}\,]$ have support in $\bigcup_{i=1}^N\Lambda_{\,\vec{z}}^i$ with cumulative probability over the respective subsets $\Lambda_{\,\vec{z}}^i$ equal to $|z_i|^2$ (and otherwise the shape of distributions being arbitrary). Another way to characterise classes of interest is to write $[\,\vec{z}\,]=\sum_{i=1}^N|z_i|^2\,[\,\vec{z}\,]_i$\,, 
which means that its elements are convex combinations of distributions in $[\,\vec{z}\,]_i$'s with weights $|z_i|^2$. As a consequence of Eq.\,(\ref{[z]i-disjoint}) we observe that such defined classes are disjoint subsets in $\mathcal{P}(\Lambda)$, i.e. we have
\begin{eqnarray}\label{[z]-disjoint}
[\,\vec{z}\,]\cap[\,\vec{z}\,']\neq\varnothing\ \ &\ \Leftrightarrow\ &\ \ \vec{z}\sim\vec{z}\,'\,.
\end{eqnarray}

\ \\\noindent\textbf{Initial preparation}. 
Any prediction of experimental behaviour rests upon knowledge of initial preparation of the system. In general, it is an intrinsic characteristic of the source which provides an ensemble of systems with a given distribution of the ontic states. However, if no such information is available, then the agent given some unknown (possibly random) source has to prepare initial ensembles of systems by herself. Here is a generic scheme how to proceed in such a case.

Since we are interested in single-particle scenarios, in the first place presence of a single particle ('\textsc{Click}') in the system should be verified. This property can be confirmed by sieving an unknown ensemble through the array of detectors $D_1\,,D_2\,,\dots,D_N$ placed in each path and retaining only those cases when a single detection occurred. In this way the agent carries out an effective initial preparation which attests to the presence of a single particle ('\textsc{Click}') in a given path. Note that on the ontological level selection of events with a single '\textsc{Click}' in detector $D_j$ results in an ensemble distributed over the ontic states  $(q,\vec{u},\vec{\tau})\in\Lambda$ subject to the following conditions: 
\begin{eqnarray}\label{InitialOntic}
\begin{array}{lllllll}q=j&,&u_j=1&,&\tau_j=1&,&\vspace{0.1cm}\\
&&u_k=\text{?}&,&\tau_{\,k}=0&,&\ \ \text{for}\ \ k\neq j\ ,
\end{array}
\end{eqnarray}
where $u_k$'s depend on the unknown source; see Eq.\,(\ref{Gates-D}).
\vspace{-0.08cm}
A quick look at definitions in Eqs.~(\ref{Lambda_zi}), (\ref{[z]i}) and (\ref{[z]}) reveals that such distributions have support in $\Lambda_{\,\bm{e}_j}^j$,
\vspace{-0.05cm}
 and hence are included in class
\begin{eqnarray}\label{Initial[z]}
[\,\bm{e}_j\,]\subset\mathcal{P}(\Lambda)&\ &\text{(\,if \,$D_j$ \,'\textsc{Clicks}'\,)}\,,
\end{eqnarray}
where $\bm{e}_j=(0,\dots,1,\dots,0)$ has single 1 in $j$-th position. In conclusion, the agent starts off in one of the classes $[\,\bm{e}_1\,]\,,\dots,[\,\bm{e}_N\,]$ which correspond to initial preparation of the system with a single particle ('\textsc{Click}') in a given path.

In the above we have assumed no prior knowledge of the source and hence the need of initial filtering of the unknown ensemble. We note that it could have been bypassed if the agent was granted access to a single particle source with all paths blocked except one (like it is usually assumed in the quantum scenarios). This can be easily realised within the model by postulating that the source injects particles (with non-vanishing amplitudes and strengths) into a given path, and the blocks remove particles resetting strength of the field to zero. We observe that it boils down to preparation of distributions in one of the classes in Eq.\,(\ref{Initial[z]}) again.

\ \\\noindent\textbf{Geometry of classes}.
It appears that the structure of classes $[\,\vec{z}\,]\subset\mathcal{P}(\Lambda)$ defined in Eq.\,(\ref{[z]}) is closed under transformations (circuits) considered in the model. Here is the key result describing behaviour of classes under action of the gates in the model, cf. Fig.~\ref{Fig-Classes} (on the right). (For the proof see Supplementary Information).

\begin{theorem}\label{Theorem}
Transformations implemented by phase shifters $S_j$, detectors $D_j$ and beam splitters $B_{st}$  act congruently on the family of classes $\big\{[\,\vec{z}\,]\subset\mathcal{P}(\Lambda):\vec{z}\in\mathbb{C}^n,\Vert\vec{z}\Vert=1\big\}$ defined in Eq.\,(\ref{[z]}). This means that classes transform as a whole, i.e.  all distributions in a given class map into distributions in some other class
\begin{eqnarray}
[\,\vec{z}\,]\ni\bm{p}\ \xymatrix{\ar[r] &}\ \bm{p}'\in[\,\vec{z}\,'\,]\,,
\end{eqnarray}
where mapping $\vec{z}\longrightarrow\vec{z}\,'$ is determined by the gates implemented in the circuit according to the following rules.\vspace{0.15cm}\\
$\bullet$ Free evolution acts trivially and phase shifter $S_j$ introduces phase in the relevant component of vector $\vec{z}$
\begin{eqnarray}
\label{Thm-Free}
z_j\ \xymatrix{\ar[r]^{free} &}\ z_j&\ \ \ \text{and}\ \ \ &z_j\ \xymatrix{\ar[r]^{S_j} &}\ e^{i\omega}\,z_j\,.
\end{eqnarray}
\vspace{0.15cm}
$\bullet$ Detector $D_j$ placed in $j$-th path '\textsc{Clicks}' with probability
\begin{eqnarray}
\label{Thm-D-prob}
\textsf{Pr}\,(D_j|\vec{z})=|z_j|^2\,,
\end{eqnarray}
and depending on the outcome effects projection of vector $\vec{z}$
\begin{eqnarray}
\label{Thm-D}
\vec{z}\ \xymatrix{\ar[r]^{D_j} &}\left\{
\begin{array}{cl}
\bm{e}_j&\ \ $'\textsc{Click}'$\,,\vspace{0.15cm}\\
\tfrac{(\mathbb{1}-\mathbb{P}_j)\,\vec{z}}{\Vert(\mathbb{1}-\mathbb{P}_j)\,\vec{z}\,\Vert}&\ \ $'\textsc{No Click}'$\,.
\end{array}
\right.
\end{eqnarray}
$\bullet$  Beam splitter $B_{st}$ at the crossing of two paths $s$ and $t$ implements the following unitary on the corresponding components of vector $\vec{z}$
\begin{eqnarray}
\label{Thm-B}
\begin{pmatrix}z_s\\z_t\end{pmatrix}\ \xymatrix{\ar[r]^{B_{st}} &}\ \begin{pmatrix}i\sqrt{R}&\sqrt{T}\\\sqrt{T}&i\sqrt{R}\end{pmatrix}\!\begin{pmatrix}z_s\\z_t\end{pmatrix}.
\end{eqnarray}
\end{theorem}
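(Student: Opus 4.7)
The plan is to verify gate-by-gate that any distribution $\bm{p}\in[\,\vec{z}\,]$ is pushed forward to a distribution in $[\,\vec{z}\,']$, with $\vec{z}\,'$ given by the quantum rule of Eqs.~(\ref{Thm-Free})--(\ref{Thm-B}). Using the decomposition $[\,\vec{z}\,]=\sum_i|z_i|^2[\,\vec{z}\,]_i$, it is enough to follow each atom $[\,\vec{z}\,]_i$ separately: track the image of $\Lambda_{\vec{z}}^i$ under the ontic gate, verify that it lands inside some $\Lambda_{\vec{z}\,'}^{i'}$ by checking conditions $(a)$--$(c)$ of Eq.~(\ref{Lambda_zi}), and then confirm that the stochastic weights reassemble into cumulative probabilities $|z_i'|^2$ over the new atoms.

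Free evolution and the phase shifter $S_j$ are deterministic; in any time step every path halves its strength, so the global maximum $\tau$ halves while the set of achieving indices is unchanged. The only amplitude change is $u_j\to e^{i\omega}u_j$ for $S_j$, giving $\Delta_{\tau/2}\vec{u}\,'\sim\vec{z}\,'$ with $\vec{z}\,'$ equal to $\vec{z}$ up to a phase on its $j$-th entry. Since the particle does not move, the weights $|z_i|^2$ are preserved, reproducing Eq.~(\ref{Thm-Free}).

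For the detector $D_j$ the click event is $q=j$, whose probability under $\bm{p}\in[\,\vec{z}\,]$ is $\bm{p}(\Lambda_{\vec{z}}^j)=|z_j|^2$, establishing Eq.~(\ref{Thm-D-prob}). Conditional on a click, the reset $u_j,\tau_j\to1$ makes path $j$ the unique maximum-strength path (all others are halved to at most $\tfrac{1}{2}$), forcing $\Delta_{\tau\,'}\vec{u}\,'=\bm{e}_j$ and placing the state in $[\,\bm{e}_j\,]$. Conditional on no click, the rule $\tau_j\to 0$ drops $j$ from the max support while the remaining max-strength paths halve to $\tau/2$, so $\Delta_{\tau/2}\vec{u}\,'\sim(\mathbb{1}-\mathbb{P}_j)\vec{z}$; the conditional particle probabilities $|z_k|^2/(1-|z_j|^2)$ for $k\neq j$ match the squared components of the normalized projection $(\mathbb{1}-\mathbb{P}_j)\vec{z}/\Vert(\mathbb{1}-\mathbb{P}_j)\vec{z}\Vert$, reproducing Eq.~(\ref{Thm-D}).

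The beam splitter $B_{st}$ is the main obstacle, since the selection matrix $\mathrm{diag}(\delta_{\tau_s\tau^{(st)}},\delta_{\tau_t\tau^{(st)}})$ in Eq.~(\ref{Gates-B-u}) couples field strengths to amplitudes. The clean observation is this: whenever $\tau_s=\tau$ or $\tau_t=\tau$ one has $\tau^{(st)}=\tau$ and the corresponding delta equals $1$, while condition $(c)$ of Eq.~(\ref{Lambda_zi}) gives $u_s=\alpha z_s$ (respectively $u_t=\alpha z_t$) for the selected amplitude, whereas the suppressed amplitude sits on a path strictly below the global maximum and hence carries $z_\cdot=0$. Thus $(u_s',u_t')=\alpha M(z_s,z_t)=\alpha(z_s',z_t')$, and after the beam splitter both paths $s,t$ have strength $\tau/2$, the new global maximum, so they enter $\Delta_{\tau/2}\vec{u}\,'$ with the correct amplitudes. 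In the degenerate case $\tau^{(st)}<\tau$ both $z_s$ and $z_t$ vanish, paths $s,t$ drop out of the new max support since $\tau^{(st)}/2<\tau/2$, and consistently $M(z_s,z_t)=0$. In every case $\Delta_{\tau\,'}\vec{u}\,'\sim\vec{z}\,'$ matches Eq.~(\ref{Thm-B}). The stochastic position update Eq.~(\ref{Gates-B-q}) together with unitarity of $M$ (so $|z_s|^2+|z_t|^2=|z_s'|^2+|z_t'|^2$) and the proportionalities $u_s'=\alpha z_s'$, $u_t'=\alpha z_t'$ then collapse the marginal particle probabilities to $|z_s'|^2,|z_t'|^2$ on paths $s,t$, while weights on $k\neq s,t$ pass through unchanged, completing the verification.
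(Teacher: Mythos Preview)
Your proof is correct and follows the same core strategy as the paper: decompose $[\,\vec{z}\,]=\sum_i|z_i|^2[\,\vec{z}\,]_i$, push each atom through the ontic gate, verify conditions $(a)$--$(c)$ of Eq.~(\ref{Lambda_zi}) for the image, and check that the stochastic weights reassemble to $|z_i'|^2$. The beam-splitter case analysis you give (splitting on whether $\tau^{(st)}$ equals the global maximum $\tau$, and using condition~$(c)$ to force $z_\cdot=0$ on sub-maximal paths) is exactly the content of the paper's argument.

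The organizational difference is worth noting. You treat one gate per time step (with free evolution on all other paths) and argue each gate type separately, which proves Theorem~\ref{Theorem} directly. The paper instead formulates and proves a parallel-gate version (Theorem~\ref{Theorem-Matrix}), handling an arbitrary simultaneous configuration $\mathcal{F},\mathcal{D},\mathcal{S},\mathcal{B}$, and then reads off Theorem~\ref{Theorem} as a special case. The technical heart of the paper's argument is a single matrix identity (their Eq.~(\ref{Delta-commutation})), asserting that $\Delta_{\tau'}$ applied after the gates equals the quantum-rule product applied to $\Delta_\tau$; its block-by-block verification is precisely your case analysis, packaged uniformly. Your approach is lighter and more transparent for the single-gate statement; the paper's buys the parallel-gate generalization at once, which matters because the circuits in the model genuinely apply several gates simultaneously across different paths.
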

\noindent We note that Theorem \ref{Theorem} is also valid for parallel transformations (gates) implemented in different paths at the same time. 
In such a case, it is implied that evolution is given by joint transformation of the respective components of vector $\vec{z}$ (see Theorem~\ref{Theorem-Matrix} in Supplementary Information for detailed formulation). 

\bibliography{CombQuant}

\newpage\ \newpage


\onecolumngrid
\begin{center}
{\large\bf{Is single-particle interference spooky?}}\vspace{0.3cm}\\
{\large\emph{Supplementary Information\vspace{0.1cm}\\Proof of Theorem~\ref{Theorem}}}\vspace{0.4cm}\\
Pawel Blasiak\\
\emph{Institute of Nuclear Physics Polish Academy of Sciences, PL-31342 Krak\'ow, Poland}
\end{center}\vspace{0.1cm}\hfill
\twocolumngrid

\section*{Preliminaries}
For the proof we switch to the matrix notation and denote $\vec{z}\in\mathbb{C}^N$ as a column vector
\begin{eqnarray}\label{z-State}
\vec{z}=\sum_{j=1}^N\ z_j\,\bm{e}_j=\left(\begin{smallmatrix} z_1\vspace{-0.1cm}\\\vdots\vspace{0.1cm}\\ z_N\end{smallmatrix}\right)\,.
\end{eqnarray}
Let us write out explicitly matrices relevant for the following analysis. We will use diagonal matrices
\begin{eqnarray}\label{Unitary-P}
\!\!\mathbb{P}_j=\left(\begin{smallmatrix} 0&&&&\vspace{-0.25cm}\\
&\ddots&&&\\
&&1&&\vspace{-0.2cm}\\
&&&\ddots&\\
&&&&0_{\,\!}\end{smallmatrix}\right)&\!\!,\ \ &\mathbb{S}_k=\left(\begin{smallmatrix} 1&&&&\vspace{-0.25cm}\\
&\ddots&&&\\
&&e^{i\omega}\!\!&&\vspace{-0.2cm}\\
&&&\ddots&\\
&&&&1\end{smallmatrix}\right),
\end{eqnarray}
where $\mathbb{P}_j$ is a projector on $j$-th
\vspace{-0.05cm}
 component and $\mathbb{S}_k$ introduces phase $e^{i\omega}$ in $k$-th component leaving the remaining ones unchanged. Another useful matrix which acts nontrivially only in components $\{s,t\}$ has the form
\begin{eqnarray}\label{Unitary-Bst}
\mathbb{B}_{st}=
\left(\begin{smallmatrix}
1&&&&&&\vspace{-0.25cm}\\\vspace{0cm}
&\ddots&&&\\
&&\hspace{-0.2cm}\ i\sqrt{R}&.\ .\ .&\sqrt{T}&\\\vspace{0.1cm}
&&\hspace{-0.2cm}\vdots&\ddots&\vdots&\\\vspace{-0.1cm}
&&\hspace{-0.2cm}\sqrt{T}&.\ .\ .&\ i\sqrt{R}&\vspace{0cm}\\
&&&&&\hspace{-0.2cm}\ddots&\\
&&&&&&1
\end{smallmatrix}\right),
\end{eqnarray}
where $R$ and $T$ are some constants such that $R+T=1$. Clearly, matrices $\mathbb{S}_j$ and $\mathbb{B}_{st}$ are unitary.

In the following we consider evolution of the system implemented by a parallel configuration of gates acting in \emph{different} paths at the same time -- this corresponds to a single step in the circuit model. It will be convenient to group paths with the same kind of gates and define the corresponding subsets as
\begin{eqnarray}\label{Partition}
\begin{array}{lll}
\mathcal{F}&-&\text{paths without gates (empty paths),}\vspace{0.05cm}\\
\mathcal{D}&-&\text{paths with detectors,}\vspace{0.05cm}\\
\mathcal{S}&-&\text{paths with phase shifters,}\vspace{0.05cm}\\
\mathcal{B}&-&\text{pairs of paths crossing on beam splitters.}
\end{array}
\end{eqnarray}
Note that $\mathcal{F}$, $\mathcal{D}$, $\mathcal{S}$ and $\bigcup\mathcal{B}$ are disjoint and exhaustive; hence it is a partition of the set of all paths $\{1,\dots,N\}$.\footnote{We have $\mathcal{F},\mathcal{D},\mathcal{S}\subset\{1,\dots,N\}$, and since $\mathcal{B}$ is a set of (unordered) pairs $\mathcal{B}=\big\{\{s_1,t_1\},\{s_2,t_2\},\dots\big\}$ we get $\bigcup\mathcal{B}=\{s_1,t_1,s_2,t_2,\dots\}\subset\{1,\dots,N\}$. By exhaustive we mean $\mathcal{F}\cup\mathcal{D}\cup\mathcal{S}\cup\bigcup\mathcal{B}=\{1,\dots,N\}$. These sets are disjoint because gates act in \emph{different} paths.}

Now, we give a detailed version of Theorem~\ref{Theorem} which in the matrix notation takes the following form.

\begin{theorem}\label{Theorem-Matrix}
Parallel configuration of gates acts congruently on the family of classes $\big\{[\,\vec{z}\,]\!\subset\!\mathcal{P}(\Lambda)\!\!:\vec{z}\in\mathbb{C}^N,\Vert\vec{z}\Vert\!=\!1\big\}$ defined in Eq.\,(\ref{[z]}). This means that classes transform as a whole, i.e. all distributions in a given class map into distributions in some other class
\begin{eqnarray}
[\,\vec{z}\,]\ni\bm{p}\ \xymatrix{\ar[r] &}\ \bm{p}'\in[\,\vec{z}\,'\,]\,,
\end{eqnarray}
where mapping $\vec{z}\longrightarrow\vec{z}\,'$ depends on the configuration of gates $\mathcal{F},\mathcal{D},\mathcal{S},\mathcal{B}$ and measurement outcomes ('\textsc{Clicks}'). It is specified by the following rules:\vspace{0.15cm}\\
$\bullet$ For the system described by a distribution in class $[\,\vec{z}\,]$ detector $D_j$ '\textsc{Clicks}' with probability
\begin{eqnarray}\label{Theorem-Matrix-Prob}
\textsf{Pr}\,(D_j|\,\vec{z}\,)=\ |z_j|^2\,,
\end{eqnarray}
and conditioning (post-selecting) on a '\textsc{Click}' in $D_j$ leaves the system in state described by a distribution $\bm{p}'\in[\,\bm{e}_j\,]$, i.e.
\begin{eqnarray}\label{Theorem-Matrix-Dj}
\vec{z}\ \xymatrix{\ar[r]^{D_j} &}\ \vec{z}\,'=\,\bm{e}_j&\ \ &('\textsc{Click}')\,.
\end{eqnarray}
At each run of experiment either one of the detectors '\textsc{Click}' or all detectors remain silent (negative measurement result), with the latter happening with probability $1-\sum_{j\in\mathcal{D}}\ |z_j|^2$.\vspace{0.15cm}\\
$\bullet$ In the case of negative measurement result ('\textsc{No Click}' in all detectors $D_j$ for $j\in\mathcal{D}$) or no measurement at all (no detectors $\mathcal{D}=\varnothing$) transformation implemented by the gates is given by
\begin{eqnarray}\label{Theorem-Matrix-Negative}
\vec{z}\ \xymatrix{\ar[r] &}\ \vec{z}\,'\sim\prod_{j\in\mathcal{D}}(\mathbb{1}-\mathbb{P}_j)\prod_{k\in\mathcal{S}}\mathbb{S}_k\!\prod_{\{s,t\}\in\mathcal{B}}\mathbb{B}_{st}\ \,\vec{z}\ ,\ \ \ \ 
\end{eqnarray}
with the order of matrices in the product being irrelevant.\footnote{\label{Footnote-Norm}Due to non-unitary projections length of $\vec{z}\,'$ in Eq.\,(\ref{Theorem-Matrix-Negative}) may be less than 1. Hence the proportionality symbol '\!$\sim$' expressing the need of subsequent renormalisation $\vec{z}\,'\leadsto\nicefrac{\vec{z}\,'}{\Vert\vec{z}\,'\Vert}$ (cf. analogous issue in the description of quantum measurement in Eq.\,(\ref{Quant-Dj})).}
\end{theorem}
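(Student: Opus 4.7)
The plan is to prove the theorem by tracking what one parallel layer of gates does to a generic ontic state $(q,\vec{u},\vec{\tau})\in\Lambda_{\vec{z}}^i$ and verifying that the resulting (possibly post-selected) distribution lies in the claimed class $[\,\vec{z}\,']$. Any $\bm{p}\in[\,\vec{z}\,]$ is a convex combination $\sum_i|z_i|^2\bm{p}_i$ with $\bm{p}_i$ supported on $\Lambda_{\vec{z}}^i$, so by linearity it suffices to establish the statement pointwise on $\Lambda_{\vec{z}}^i$ and then aggregate the transported masses. The two sub-goals are: (i) to check that the new ontic state $(q',\vec{u}',\vec{\tau}')$ satisfies the three defining conditions of $\Lambda_{\vec{z}'}^{q'}$ in Eq.~(\ref{Lambda_zi}); and (ii) to check that the total mass delivered to $\Lambda_{\vec{z}'}^{i'}$ equals $|z'_{i'}|^2$.

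For (i), I read off from Eqs.~(\ref{Gates-Free})--(\ref{Gates-B-tau}) how the amplitudes and strengths evolve. In the no-click branch the detector paths have $\tau_j\to 0$ while every non-detector path has its strength halved, so the new maximum is $\tau'=\tau/2$, attained on those non-detector paths which had $\tau_j=\tau$ (or, at a beam-splitter crossing, on both paths whenever $\max(\tau_s,\tau_t)=\tau$). The delta-factor in Eq.~(\ref{Gates-B-u}) zeros out any sub-maximal beam-splitter input, so when projected onto the components of strength $\tau'$ the parallel layer acts linearly on $\Delta_\tau\vec{u}$ and reproduces the matrix product of Eq.~(\ref{Theorem-Matrix-Negative}); the order of factors is immaterial because the gates act on disjoint paths. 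Writing $\Delta_\tau\vec{u}=\alpha\vec{z}$ from the class assumption, this yields $\Delta_{\tau'}\vec{u}'=\alpha\,\mathcal{M}\vec{z}\sim\vec{z}\,'$ with $\mathcal{M}=\prod_{j\in\mathcal{D}}(\mathbb{1}-\mathbb{P}_j)\prod_{k\in\mathcal{S}}\mathbb{S}_k\prod_{\{s,t\}\in\mathcal{B}}\mathbb{B}_{st}$. The particle position transforms as $q'=i$ whenever $i\in\mathcal{F}\cup\mathcal{S}$, while if $i\in\{s,t\}$ for some $\{s,t\}\in\mathcal{B}$ the particle jumps to $s$ or $t$ according to Eq.~(\ref{Gates-B-q}); in every case $\tau'_{q'}=\tau'$, so conditions (a)--(c) of $\Lambda_{\vec{z}'}^{q'}$ are met.

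Task (ii) is the core probability bookkeeping. For $i'\in\mathcal{F}\cup\mathcal{S}$ only the incoming mass $|z_{i'}|^2$ from $\Lambda_{\vec{z}}^{i'}$ reaches $\Lambda_{\vec{z}'}^{i'}$, and the phase and identity actions preserve moduli, so $|z'_{i'}|^2=|z_{i'}|^2$. For a beam-splitter pair $\{s,t\}\in\mathcal{B}$, a short case analysis (see below) shows that the output amplitudes at strength $\tau'$ satisfy $u'_s=\alpha z'_s$ and $u'_t=\alpha z'_t$, so the jump probabilities of Eq.~(\ref{Gates-B-q}) become $|z'_s|^2/(|z'_s|^2+|z'_t|^2)$ and $|z'_t|^2/(|z'_s|^2+|z'_t|^2)$; combining with the incoming masses $|z_s|^2,|z_t|^2$ and using the unitarity identity $|z_s|^2+|z_t|^2=|z'_s|^2+|z'_t|^2$ gives delivered masses $|z'_s|^2$ and $|z'_t|^2$. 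Post-selection on no-click rescales these masses by $1/\sum_{i'\notin\mathcal{D}}|z_{i'}|^2=1/\|\mathcal{M}\vec{z}\|^2$, which is exactly the renormalization implicit in the proportionality of Eq.~(\ref{Theorem-Matrix-Negative}). The click case is immediate: $D_j$ clicks iff $q=j$, an event of probability $|z_j|^2$ by construction of $[\,\vec{z}\,]$; the detector then deterministically sets $q'=j,\,u'_j=1,\,\tau'_j=1$, making $\tau'_j$ the unique new maximum and $\Delta_{\tau'}\vec{u}'\sim\bm{e}_j$, so the post-click distribution lies in $[\,\bm{e}_j\,]$.

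The main obstacle will be the uniform treatment of the beam splitter across the four strength configurations ($\tau_s=\tau_t=\tau$, $\tau_s=\tau>\tau_t$, $\tau_t=\tau>\tau_s$, and $\tau_s,\tau_t<\tau$): in each case the delta-factors, the particle-jump rule, and the relation $\Delta_\tau\vec{u}\sim\vec{z}$ (which in particular forces $z_j=0$ whenever $\tau_j<\tau$) must conspire to give the clean identification $u'_{s,t}=\alpha z'_{s,t}$. Once this case analysis is carried out, the remainder is a disciplined mass-conservation argument leaning on the unitarity of $\mathbb{S}_k$ and $\mathbb{B}_{st}$.
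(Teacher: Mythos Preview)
Your proposal is correct and follows essentially the same route as the paper's proof. The paper packages your ``pointwise on $\Lambda_{\vec{z}}^i$'' analysis as a separate Lemma on the auxiliary classes $[\,\vec{z}\,]_i$ before combining into $[\,\vec{z}\,]$, and it records your observation $\Delta_{\tau'}\vec{u}'=\alpha\,\mathcal{M}\vec{z}$ as an explicit block-diagonal matrix identity $\Delta_{\tau'}\big(\prod_k\mathbb{S}_k\prod_{\{s,t\}}\mathbb{B}_{st}\Delta^{(st)}_\tau\big)=\mathcal{M}\,\Delta_\tau$, verified case-by-case on the $1{\times}1$ and $2{\times}2$ blocks (your four strength configurations collapsing to its three); but the substance---the strength bookkeeping $\tau'=\tau/2$, the beam-splitter case split, and the unitarity-based mass accounting $|z_s|^2+|z_t|^2=|w'_s|^2+|w'_t|^2$ with the auxiliary unnormalized $\vec{w}'=\mathcal{M}\vec{z}$---is identical.
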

This formulation deals explicitly with the case of parallel transformations in different paths. It is straightforward to convince oneself that Theorem~\ref{Theorem} follows from Theorem~\ref{Theorem-Matrix} (both are actually equivalent with the latter one being a more rigorous version). It is thus enough to prove the matrix version given above. 

\section{A helpful Lemma}

We begin with a lemma describing transformation of distributions in auxiliary classes $[\,\vec{z}\,]_i\subset\mathcal{P}(\Lambda)$ defined in Eq.\,(\ref{[z]i}) implemented by a single step in the circuit.
\begin{lemma}\label{Lemma}
Suppose that the system is described by a distribution $\bm{p}\in[\,\vec{z}\,]_i$. Then, parallel configuration of gates specified by $\mathcal{F},\mathcal{D},\mathcal{S},\mathcal{B}$ effects transformation with the following properties.
\vspace{0.15cm}\\
$\bullet$ If there is a detector placed in $i$-th path (i.e. we have $i\in\mathcal{D}$), then $D_i$  '\textsc{Clicks}' with certainty and afterwards the system is described by a distribution $\bm{p}'\in[\,\bm{e}_i\,]_i$, i.e. we have
\begin{eqnarray}\label{Lemma-Di}
[\,\vec{z}\,]_i\ni\bm{p}\ \xymatrix{\ar[r]^{D_i} &}\ \bm{p}'\in[\,\bm{e}_i\,]_i&\ \ &('\textsc{Click}')\,,
\end{eqnarray}
and all other detectors $D_j$ with $j\neq i$ remain silent.
\vspace{0.15cm}\\
$\bullet$ If there is no detector in $i$-th path (i.e. we have $i\notin\mathcal{D}$), then none of the detectors '\textsc{Click}' and afterwards the system is described by a distribution $\bm{p}'$ characterised by vector$\,^{\ref{Footnote-Norm}}$
\begin{eqnarray}\label{lemma-z'}
\vec{z}\,'\sim\prod_{j\in\mathcal{D}}(\mathbb{1}-\mathbb{P}_j)\prod_{k\in\mathcal{S}}\mathbb{S}_k\!\prod_{\{s,t\}\in\mathcal{B}}\!\mathbb{B}_{st}\ \,\vec{z}\,,
\end{eqnarray}
according with the following rules:\vspace{0.15cm}

$\diamond$ In the case when $i$-th path does not go into a beam splitter (i.e. we have $i\notin\bigcup\mathcal{B}$), then 
\begin{eqnarray}\label{zz}
[\,\vec{z}\,]_i\ni\bm{p}\ \xymatrix{\ar[r] &}\ \bm{p}'\in[\,\vec{z}\,'\,]_i\,.
\end{eqnarray}

$\diamond$  Otherwise, if $i$-th path goes into the beam splitter $\mathbb{B}_{st}$ (i.e. we have $i\in\{s,t\}\in\mathcal{B}$), then we get
\begin{eqnarray}\label{zz-BS}
[\,\vec{z}\,]_i\ni\bm{p}\ \xymatrix{\ar[r] &}\ 
\bm{p}'=\tfrac{|z_s'|^2}{|z_s'|^2+|z_t'|^2}\,\bm{p}_s'+\tfrac{|z_t'|^2}{|z_s'|^2+|z_t'|^2}\,\bm{p}_t'\,,\ \ \ \ 
\end{eqnarray}
with distributions $\bm{p}_s'\in[\,\vec{z}\,']_s$ and $\bm{p}_t'\in[\,\vec{z}\,']_t$ (it holds for $|z_s'|^2+|z_t'|^2\neq0$).
\end{lemma}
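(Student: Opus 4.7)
The strategy is to unpack what membership in $[\,\vec{z}\,]_i$ means at the ontic level, trace an arbitrary point of the support through the parallel gate configuration, and read off which class the outcome falls in. I would start by fixing $\lambda=(q,\vec{u},\vec{\tau})$ in $\mathrm{supp}\,\bm{p}\subset\Lambda_{\,\vec{z}}^i$, so that $q=i$, $\tau_i=\tau:=\max_j\tau_j>0$, and $\Delta_\tau\vec{u}=\alpha_\lambda\vec{z}$ for some $\alpha_\lambda\neq 0$. The preliminary observation that drives the whole argument is that this proportionality forces $z_j=0$ whenever $\tau_j<\tau$, since those indices are annihilated by $\Delta_\tau$; equivalently, the ``support'' of $\vec{z}$ is confined to the paths where the field attains its maximal strength.

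\textbf{Case $i\in\mathcal{D}$.} Since $q=i$, rule (\ref{Gates-D}) makes $D_i$ \textsc{'Click'} with certainty and produces $\tau_i'=1$, $u_i'=1$; every other detector is silent and gives $\tau_j'=0$, while free evolution, phase shifters, and beam splitters merely halve strengths. Thus $\tau'=1$ is attained uniquely at $i$, giving $\Delta_{\tau'}\vec{u}'=\bm{e}_i$ and placing the new ontic state in $\Lambda_{\bm{e}_i}^i$, which is (\ref{Lemma-Di}).

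\textbf{Case $i\notin\mathcal{D}$.} The particle stays at $i$ (if $i\in\mathcal{F}\cup\mathcal{S}$) or hops under rule (\ref{Gates-B-q}) on a beam splitter (if $i\in\{s,t\}\in\mathcal{B}$). One first checks that the new maximum is $\tau'=\tau/2$, attained at $i$ together with every path that originally had strength $\tau$ and every beam-splitter partner of such a path. The amplitude updates can then be read off (\ref{Gates-Free})--(\ref{Gates-B-u}); the pivotal algebraic step is that on a pair $\{s,t\}\in\mathcal{B}$ with $\tau^{(st)}=\tau$, the $\delta_{\tau\,\tau^{(st)}}$ mask in (\ref{Gates-B-u}) only ever suppresses components where $z_s$ or $z_t$ already vanishes by the preliminary observation, so masking $\vec{u}$ is equivalent to applying the full unitary $\mathbb{B}_{st}$ directly to $\vec{z}$. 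Collecting terms yields $\Delta_{\tau'}\vec{u}'=\alpha_\lambda\vec{z}'$ with $\vec{z}'$ as in (\ref{lemma-z'}). If $i\notin\bigcup\mathcal{B}$ the new point sits in $\Lambda_{\vec{z}'}^i$, giving (\ref{zz}); if $i\in\{s,t\}\in\mathcal{B}$ the factor $|\alpha_\lambda|^2$ cancels in the hop probabilities (\ref{Gates-B-q}) leaving the class-level weights $|z_{s,t}'|^2/(|z_s'|^2+|z_t'|^2)$, and partitioning the conditional distributions by the new particle position produces the convex mixture (\ref{zz-BS}).

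\textbf{Main obstacle.} The beam-splitter step is the delicate one. One must simultaneously track the mask $\Delta_{\tau^{(st)}}$ that discards the weaker field only when strengths differ, the forcing $z_j=0$ at sub-maximal positions, the implicit renormalisation hidden in ``$\sim$'', and the translation of ontic amplitude ratios into class-level hop weights. The preliminary observation is what makes the machinery cohere: it certifies that the ontic suppression induced by $\Delta_{\tau^{(st)}}$ is invisible at the level of $\vec{z}$, so the clean unitary rule (\ref{Thm-B}) emerges from the messier ontic rule (\ref{Gates-B-u}) with the hop probabilities automatically inheriting the Born-like form required for (\ref{zz-BS}).
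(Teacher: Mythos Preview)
Your plan is sound and tracks the paper's argument closely: fix an ontic state in $\Lambda_{\vec z}^i$, push it through the parallel gates, and verify conditions $a)$, $b)$, $c)$ of Eq.~(\ref{Lambda_zi}) for the image. The detector-click case and the particle-position analysis (including the cancellation of $|\alpha_\lambda|^2$ in the hop probabilities) match the paper exactly.

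The one place you and the paper organise things differently is the verification of condition $c)$ in the no-click branch. You lean on your ``preliminary observation'' ($z_j=0$ whenever $\tau_j<\tau$) to argue component-wise that the beam-splitter mask in Eq.~(\ref{Gates-B-u}) is invisible at the level of $\vec z$. The paper instead packages the same content as a standalone matrix identity, Eq.~(\ref{Delta-commutation}), relating $\Delta_{\tau'}$ on the left to $\Delta_\tau$ on the right; it is proved block by block over $\mathcal F,\mathcal D,\mathcal S,\mathcal B$ (with three subcases for each beam-splitter pair according to whether $\tau_s,\tau_t$ lie below or at $\tau$), and only afterwards is $\Delta_\tau\vec u\sim\vec z$ invoked. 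Your route is more conceptual and slightly shorter; theirs isolates an identity that depends on $\vec\tau$ alone. The underlying case analysis is the same either way.

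One slip to fix: your description of where $\tau'=\tau/2$ is attained (``every path that originally had strength $\tau$'') accidentally sweeps in detector paths. A path $j\in\mathcal D$ with $\tau_j=\tau$ has $\tau_j'=0$ by Eq.~(\ref{Gates-D}), so $\delta_{\tau_j'\tau'}=0$; this is precisely what produces the factor $(\mathbb{1}-\mathbb P_j)$ in Eq.~(\ref{lemma-z'}). If you left such a path in the $\tau'$-attaining set you would get $(\Delta_{\tau'}\vec u\,')_j=u_j$, which need not vanish, and the proportionality to $\vec z\,'$ would fail. Just exclude $\mathcal D$ explicitly when you write out the details.
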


\begin{proof}[\textbf{Proof of Lemma~\ref{Lemma}}]\ \vspace{0.1cm}

Let us take $\bm{p}\in[\,\vec{z}\,]_i$. From the definition of Eq.\,(\ref{[z]i}) we have that
the ontic state $(q,\vec{u},\vec{\tau})$ of the system described by distribution $\bm{p}$ is certainly in the subset $\Lambda_{\vec{z}}^i$\,, meaning that it satisfies three conditions of Eq.\,(\ref{Lambda_zi})
\begin{eqnarray}\label{abc}
\!\!a)\ \ q=i\,,\ \ \ &
b)\ \ \tau_i=\tau>0\,,\ \ \ \ &
c)\ \ \Delta_\tau\,\vec{u}\sim\,\vec{z}\,,
\end{eqnarray}
where
\begin{eqnarray}\label{abc-tau}
\tau&\!:=\!&\max\,\{\tau_1\,,\dots,\tau_N\}\,,\\\label{abc-delta}
\Delta_\tau&\!:=\!&\text{diag}\,(\delta_{\tau_1\tau}\,,\dots,\,\delta_{\tau_N\tau})\,.
\end{eqnarray}

In the following we seek the form of distribution $\bm{p}'$ obtained from $\bm{p}$ as a result of parallel configuration of gates specified by $\mathcal{F}$, $\mathcal{D}$, $\mathcal{S}$ and $\mathcal{B}$. Our strategy is to take an ontic state in support of $\bm{p}$, i.e. satisfying conditions in Eq.\,(\ref{abc}), and check its properties after the transformation. This will give knowledge about the support of distribution $\bm{p}'$ and then comparison with conditions in Eqs.~(\ref{Lambda_zi}) and (\ref{[z]i}) will prove the result.

$\bullet$ {\it First part -- Eq.\,(\ref{Lemma-Di}).}\ \vspace{0.1cm}

First, we note that $\textit{\textsf{Pr}}\,(q=i)=1$ and $\textit{\textsf{Pr}}\,(q\neq i)=0$. It means that detector $D_i$ placed in $i$-th path '\textsc{Clicks}' with certainty and detectors in other paths $D_j$ with $j\neq i$ remain silent ('\textsc{No Click}'). Moreover, after the detection the particle remains in the path
\begin{eqnarray}
q\longrightarrow q'=q=i\,,
\end{eqnarray}

Second, along with a '\textsc{Click}' detector $D_i$ modifies amplitude and strength of the field in $i$-th path as described by Eq.\,(\ref{Gates-D}), i.e. we get
\begin{eqnarray}
u_i\longrightarrow u_i'=1&\ \ \text{and}\ \ &\tau_i\longrightarrow\tau_i'=1\,.
\end{eqnarray}

Third, a quick look at Eqs.~(\ref{Gates-Free})-(\ref{Gates-B-tau}) reveals that strength of the fields in other paths decreases, which entails that $\tau_m\longrightarrow\tau_m'<1$ for $m\neq i$. Together with the previous equation it gives
\begin{eqnarray}
\tau':=\max\,\{\tau_1'\,,\dots,\tau_n'\}=1=\tau_i'\,,
\end{eqnarray}
and hence we obtain $\Delta_{\tau'}=\text{diag}\,(0\,,\dots,1\,,\dots\,0)$ with a single 1 in $i$-th place. Therefore, we have the identity
\begin{eqnarray}
\Delta_{\tau'\,}\vec{u}'=\bm{e}_i\,.
\end{eqnarray}

Putting all this together, we infer that for any configuration of gates with $D_i$ in $i$-th path (i.e. for $i\in\mathcal{D}$) after the transformation  the system is left in the ontic state $(q',\vec{u}',\vec{\tau}')$ which satisfies the conditions 
\begin{eqnarray}
\!\!\!\!\!\!\!\!a)\ \ q'=i\,,\ \ \ &
b)\ \ \tau_{i}'=\tau'>0\,,\ \ \ \ &
c)\ \ \Delta_{\tau'\,}\vec{u}'\sim\,\bm{e}_i\,.
\end{eqnarray}
In consequence, any distribution $\bm{p}\in[\,\vec{z}\,]_i$ gets transformed into a distribution $\bm{p}'\in[\,\bm{e}_i\,]_i$ (see Eqs.~(\ref{Lambda_zi}) and (\ref{[z]i})). This proves the first part of Lemma~\ref{Lemma} Eq.\,(\ref{Lemma-Di}).\vspace{0.2cm} 

$\bullet$ {\it Second part -- Eqs.~(\ref{lemma-z'})-(\ref{zz-BS}).}\ \vspace{0.1cm}

We look at the second part of Lemma~\ref{Lemma} when there is no detector in $i$-th path (i.e. $i\notin\mathcal{D}$). Clearly, in this situation all detectors remain silent ('\textsc{No Click}'), since the particle is in $i$-th path (i.e. $q=i\notin\mathcal{D}$). 

Let us begin by writing explicitly how strength of the field changes in each path. From Eqs.~(\ref{Gates-Free})-(\ref{Gates-B-tau}) we have:
\begin{eqnarray}\label{F_prime}
&&\tau_l\longrightarrow\tau_l'=\nicefrac{\tau_l}{2}\ \ \ \ \ \ \ \ \,\text{for}\ \ \ l\in\mathcal{F}\,,\\\label{S_prime}
&&\tau_k\longrightarrow\tau_k'=\nicefrac{\tau_k}{2}\ \ \ \ \ \ \ \ \text{for}\ \ \ k\in\mathcal{S}\,,\\\label{D_prime}
&&\tau_j\longrightarrow\tau_j'=0\ \ \ \ \ \ \ \ \ \ \ \ \text{for}\ \ \ j\in\mathcal{D}\,,\\\label{B_prime}
&&\tau_r\longrightarrow\tau_r'=\nicefrac{\tau^{\text{($st$)}}}{2}\ \ \ \ \ \ \text{for}\ \ \ r\in\{s,t\}\in\mathcal{B}\,,
\end{eqnarray}
where $\tau^{\text{($st$)}}:=\max\,\{\tau_s,\tau_t\}$. Together with the defining condition $\tau_i=\tau$ (see Eqs.~(\ref{abc}) and (\ref{abc-tau})) it entails that $\tau_m'\leqslant \nicefrac{\tau}{2}$ for all paths $m=1,\dots,N$. Furthermore, it follows that $\tau_i'=\nicefrac{\tau_i}{2}$ (since $i\notin\mathcal{D}$ and in case $i\in\{r,s\}\in\mathcal{B}$ we have $\tau^{\text{($st$)}}=\tau_i$). Therefore, we get
\begin{eqnarray}\label{T_prime}
\tau'\!:=\max\,\{\tau_1'\,,\dots,\tau_N'\}=\tau_i'=\nicefrac{\tau}{2}>0\,,
\end{eqnarray}
which together with Eqs.~(\ref{F_prime})\,-\,(\ref{B_prime}) allows to compare $\Delta_{\tau'}\!:=\!\text{diag}\,(\delta_{\tau_1'\tau'}\,,\dots,\,\delta_{\tau_N'\tau'})$ with $\Delta_{\tau}$ in Eq.\,(\ref{abc-delta}).

In the following step we will investigate transformation properties of the vector of field amplitudes $\vec{u}$. Since action of each gate is limited to the path(s) it is attached to, the effect of each separate gate can be written in the following way (see Eqs.~(\ref{Gates-Free}) and (\ref{Gates-B-tau}))
\begin{eqnarray}
&&\vec{u}\ \xymatrix{\ar[r]^{free} &}\ \vec{u}'=\vec{u}\,,\\
&&\vec{u}\ \xymatrix{\ar[r]^{S_k} &}\ \vec{u}'=\mathbb{S}_k\,\vec{u}\,,\\\label{D-u}
&&\vec{u}\ \xymatrix{\ar[r]^{D_j} &}\ \vec{u}'=\vec{u}\,,\\
&&\vec{u}\ \xymatrix{\ar[r]^{B_{st}} &}\ \vec{u}'=\mathbb{B}_{st}\,\Delta^{\text{($st$)}}_\tau\,\vec{u}\,,
\end{eqnarray}
where 
\begin{eqnarray}
\!\!\!\!\!\Delta^{\text{($st$)}}_\tau:=\text{diag
}\,(1,\dots,\,\delta_{\tau_s\tau^{\text{($st$)}}}\,,\dots,\,\delta_{\tau_t\tau^{\text{($st$)}}}\,,\dots,1)\,,
\end{eqnarray}
with all 1's on the diagonal except entries $s$ and $t$ which depend on $\tau^{\text{($st$)}}\!:=\max\,\{\tau_s\,,\tau_t\}$. Recall that we consider the case $q=i\notin\mathcal{D}$, and hence for all $j\in\mathcal{D}$ we have $q\neq j$, which explains trivial action of detectors in Eq.\,(\ref{D-u}). Taking all this together, joint transformation implemented by a parallel configuration of gates $\mathcal{F}$, $\mathcal{D}$, $\mathcal{S}$ and $\mathcal{B}$ boils down to the product
\begin{eqnarray}\label{u'}
\vec{u}\ \longrightarrow\ \vec{u}'=\prod_{k\in\mathcal{S}}\mathbb{S}_k\!\prod_{\{s,t\}\in\mathcal{B}}\!\mathbb{B}_{st}\,\Delta^{\text{($st$)}}_\tau\,\vec{u}\,.
\end{eqnarray}

Now, we will justify the following matrix identity:
\begin{eqnarray}\label{Delta-commutation}
\begin{aligned}
\underbrace{\Delta_{\tau'}\left(\prod_{k\in\mathcal{S}}\mathbb{S}_k\!\prod_{\{s,t\}\in\mathcal{B}}\!\mathbb{B}_{st}\,\Delta^{\text{($st$)}}_\tau\right)}_\mathbb{L}\ \ \ \ \ \ \ \ \ \ \ \ \ \ \ \ \ \ \\
=\ \underbrace{\left(\prod_{j\in\mathcal{D}}(\mathbb{1}-\mathbb{P}_j)\prod_{k\in\mathcal{S}}\mathbb{S}_k\!\prod_{\{s,t\}\in\mathcal{B}}\!\mathbb{B}_{st}\right)\Delta_{\tau}}_\mathbb{R}\,.
\end{aligned}
\end{eqnarray}
For the proof we observe that on both sides all matrices in the products are diagonal except for matrices $\mathbb{B}_{st}$ with $2\times2$ blocks acting in entries $\{s,t\}$ (without overlap for different $\mathbb{B}_{st}$). Therefore we have the same block diagonal structure of the matrix both on the left $\mathbb{L}$ and on the right $\mathbb{R}$ hand side, which consists of $\text{1}\times\text{1}$ blocks in entries $l\in\mathcal{F}$, $j\in\mathcal{D}$, $k\in\mathcal{S}$ and $\text{2}\times\text{2}$ blocks in entries $\{s,t\}\in\mathcal{B}$. It is thus enough to verify each block separately in the identity. 
For $\text{1}\times\text{1}$ blocks, we have
\begin{eqnarray}
&&\!\!\!\!\!\!\!\!\!\!\!\!\!\!\!\!\!\!\!\!\!\!\mathbb{L}_{ll}=\delta_{\tau_l'\tau'}\stackrel{(\ref{F_prime})(\ref{T_prime})}{=}\delta_{\tau_l\tau}=\mathbb{R}_{ll}\,,\\
&&\!\!\!\!\!\!\!\!\!\!\!\!\!\!\!\!\!\!\!\!\!\!\mathbb{L}_{kk}=\delta_{\tau_k'\tau'}\,(\mathbb{S}_k)_{kk}\stackrel{(\ref{S_prime})(\ref{T_prime})}{=}(\mathbb{S}_k)_{kk}\,\delta_{\tau_k\tau}=\mathbb{R}_{kk}\,,\\
&&\!\!\!\!\!\!\!\!\!\!\!\!\!\!\!\!\!\!\!\!\!\!\mathbb{L}_{jj}=\delta_{\tau_j'\tau'}\stackrel{(\ref{D_prime})(\ref{T_prime})}{=}0\stackrel{(\ref{Unitary-P})}{=}(1-\mathbb{P}_j)_{jj}\,\delta_{\tau_j\tau}=\mathbb{R}_{jj}\,,
\end{eqnarray}
For $2\times2$ blocks, in the subspace restricted to the respective entries $\{r,s\}\in\mathcal{B}$ we get 
\begin{eqnarray}
&&\!\!\!\!\!\!\!\!\!\!\!\!\!\!\!\!\!\!\!\!\!\!\mathbb{L}_{\{s,t\}}=\begin{pmatrix}\delta_{\tau_s'\tau'}&\!\!0\\0&\!\!\delta_{\tau_t'\tau'}\end{pmatrix}\!\begin{pmatrix}i\sqrt{R}&\sqrt{T}\\\sqrt{T}&i\sqrt{R}\end{pmatrix}\!\begin{pmatrix}\delta_{\tau_s\tau^{\text{($st$)}}}&\!\!0\\0&\!\!\delta_{\tau_t\tau^{\text{($st$)}}}\end{pmatrix},
\end{eqnarray}
and
\begin{eqnarray}
&&\!\!\!\!\!\!\!\!\!\!\!\!\!\!\!\!\!\!\!\!\!\!\mathbb{R}_{\{s,t\}}=\begin{pmatrix}i\sqrt{R}&\sqrt{T}\\\sqrt{T}&i\sqrt{R}\end{pmatrix}\!\begin{pmatrix}\delta_{\tau_s\tau}&\!\!0\\0&\!\!\delta_{\tau_t\tau}\end{pmatrix}.
\end{eqnarray}
In order to show $\mathbb{L}_{\{s,t\}}=\mathbb{R}_{\{s,t\}}$ we use Eqs.~(\ref{B_prime}) and (\ref{T_prime}) to check the following three cases.\\
Case $\tau_s\,,\tau_t<\tau$. Then we have $\tau_s'\,=\tau_t'=\nicefrac{\tau^{\text{($st$)}}}{2}<\nicefrac{\tau}{2}=\tau'$, and hence
\begin{eqnarray}
\mathbb{L}_{\{s,t\}}=0=\mathbb{R}_{\{s,t\}}\,.
\end{eqnarray}
Case $\tau_s=\tau_t=\tau$. Then we have $\tau_s'\,=\tau_t'=\nicefrac{\tau^{\text{($st$)}}}{2}=\nicefrac{\tau}{2}=\tau'$, which gives
\begin{eqnarray}
\mathbb{L}_{\{s,t\}}=\left(\begin{smallmatrix}i\sqrt{R}&\sqrt{T}\\\sqrt{T}&i\sqrt{R}\end{smallmatrix}\right)=\mathbb{R}_{\{s,t\}}\,.
\end{eqnarray}
Case $\tau_s<\tau_t=\tau$. Then we have $\tau_s'\,=\tau_t'=\nicefrac{\tau^{\text{($st$)}}}{2}=\nicefrac{\tau}{2}=\tau'$, and consequently
\begin{eqnarray}
\mathbb{L}_{\{s,t\}}=\left(\begin{smallmatrix}i\sqrt{R}&\sqrt{T}\\\sqrt{T}&i\sqrt{R}\end{smallmatrix}\right)\!\left(\begin{smallmatrix}0\,&0\vspace{0.1cm}\\0\,&1\end{smallmatrix}\right)=\mathbb{R}_{\{s,t\}}\,.
\end{eqnarray}
(For the case $\tau_t<\tau_s=\tau$ similar reasoning holds).

Having verified identity in Eq.\,(\ref{Delta-commutation}) we can write
\begin{eqnarray}\nonumber
\!\!\!\!\!\!\!\!\!\!\Delta_{\tau'}\,\vec{u}'
&\stackrel{(\ref{u'})(\ref{Delta-commutation})}{=}&\Bigg(\prod_{j\in\mathcal{D}}(\mathbb{1}-\mathbb{P}_j)\prod_{k\in\mathcal{S}}\mathbb{S}_k\!\prod_{\{s,t\}\in\mathcal{B}}\!\mathbb{B}_{st}\Bigg)\ \Delta_{\tau}\,\vec{u}\\
&\stackrel{(\ref{abc})}{\sim}&\prod_{j\in\mathcal{D}}(\mathbb{1}-\mathbb{P}_j)\prod_{k\in\mathcal{S}}\mathbb{S}_k\!\prod_{\{s,t\}\in\mathcal{B}}\!\mathbb{B}_{st}\ \vec{z}\,,
\end{eqnarray}
which proves that after transformation the following condition holds
\begin{eqnarray}\label{c}
c)\ \ \ \Delta_{\tau'}\,\vec{u}'\sim\vec{z}\,'\,,
\end{eqnarray}
with vector $\vec{z}\,'$ given in Eq.\,(\ref{lemma-z'}).

Finally, let us check position of the particle after the transformation if we know that at the beginning it is in path $q=i$. Clearly, change of the path is possible only on a beam splitter and hence we have two cases.

In the case when $i$-th path does not go into a beam splitter (i.e. for $i\notin\bigcup\mathcal{B}$) the particle remains in the path $q\ \longrightarrow\ q'=q=i$. Hence, along with Eq.\,(\ref{T_prime}) we get
\begin{eqnarray}\label{ab}
a)\ \ q'=i\,,\ \ \ &
b)\ \ \tau_i'=\tau'>0\,.
\end{eqnarray}
A quick comparison of Eqs.~(\ref{c}) and (\ref{ab}) with definitions in Eqs.~(\ref{Lambda_zi}) and (\ref{[z]i}) reveals that in that case distribution $\bm{p}\in[\,\vec{z}\,]_i$ gets transformed into a distribution with support in $\Lambda_{\vec{z}\,'}^i$ meaning that $\bm{p}'\in[\,\vec{z}\,']_i$. It proves Eq.\,(\ref{zz}) with vector $\vec{z}\,'$ given in Eq.\,(\ref{lemma-z'}).

Otherwise, if $i$-th path crosses with another path on the beam splitter $\mathbb{B}_{st}$, i.e. for $i\in\{s,t\}\in\mathcal{B}$, the particle my change its position either to path $s$ or $t$ with the respective probabilities specified by Eq.\,(\ref{Gates-B-q}), i.e. we get
\begin{eqnarray}\label{q'st}
q\longrightarrow \Bigg\{\begin{array}{ll}q'=s&\,\text{with probability\ \ }\tfrac{|u_s'|^2}{|u_s'|^2+|u_t'|^2}\,,\vspace{0.1cm}\\q'=t&\,\text{with probability\ \ }\ \tfrac{|u_t'|^2}{|u_s'|^2+|u_t'|^2}\,.\end{array}
\end{eqnarray}
Let us note that in this case Eqs.~(\ref{B_prime}) and (\ref{T_prime}) entail that
\begin{eqnarray}\label{b'}
b)\ \ \ \ \tau_s'=\tau_t'=\tau'>0\,.
\end{eqnarray}
In particular, it means that $\Delta_{\tau'}$ restricted to entries $\{s,t\}$ equals identity. Therefore, from Eq.\,(\ref{c}) we have
\begin{eqnarray}
\begin{pmatrix}u_s'\\u_t'\end{pmatrix}\sim\begin{pmatrix}z_s'\\z_t'\end{pmatrix}\,,
\end{eqnarray}
which allows to replace $u$'s with $z$'s in Eqs.~(\ref{q'st}) whenever $|z_s'|^2+|z_t'|^2\neq0$. We thus obtain
\begin{eqnarray}\label{a'}
a)\ \ \ \ \Bigg\{\begin{array}{ll}q'=s&\,\text{with probability\ \ }\ \tfrac{|z_s'|^2}{|z_s'|^2+|z_t'|^2}\,,\vspace{0.1cm}\\q'=t&\,\text{with probability\ \ }\ \tfrac{|z_t'|^2}{|z_s'|^2+|z_t'|^2}\,.\end{array}
\end{eqnarray}
This result along with Eqs.~(\ref{c}) and (\ref{b'}) should be compared with definitions in Eqs.~(\ref{Lambda_zi}) and (\ref{[z]i}). We conclude that in the case of beam splitter placed in $i$-th path the system initially described by distribution $\bm{p}\in[\,\vec{z}\,]_i$ after the transformation has support in $\Lambda_{\vec{z}\,'}^s\cup\Lambda_{\vec{z}\,'}^t$ with cumulative probability over the respective sets given by Eq.\,(\ref{a'}). In other words, the system is described by a probabilistic mixture
\begin{eqnarray}
\bm{p}'=\tfrac{|z_s'|^2}{|z_s'|^2+|z_t'|^2}\,\bm{p}_s'+\tfrac{|z_t'|^2}{|z_s'|^2+|z_t'|^2}\,\bm{p}_t'\,,
\end{eqnarray}
with $\bm{p}_s'\in[\,\vec{z}\,']_s$ and $\bm{p}_t'\in[\,\vec{z}\,']_t$ which proves Eq.\,(\ref{zz-BS}) with vector $\vec{z}\,'$ given in Eq.\,(\ref{lemma-z'}).

\end{proof}

\section{Proof of Theorem~\ref{Theorem-Matrix}}

Now, we are ready to prove our main result.

\begin{proof}[\textbf{Proof of Theorem~\ref{Theorem-Matrix}}]\ \vspace{0.1cm}

Let us take $\bm{p}\in[\,\vec{z}\,]$. Recall that Definition~\ref{Definition} Eq.\,(\ref{[z]}) specifies distributions  in $[\,\vec{z}\,]$ as convex combinations of distributions in auxiliary classes $\bm{p}_i\in[\,\vec{z}\,]_i$, i.e. we have
\begin{eqnarray}\label{p}
\bm{p}=\sum_{i=1}^Np_i\,\bm{p}_i\,,
\end{eqnarray}
with
\begin{eqnarray}\label{pp}
p_i=|z_i|^2&\ \ \text{and}\ \ &\text{supp}\,\bm{p}_i\subset\Lambda_{\vec{z}}^i\,.
\end{eqnarray}
Clearly, since $\vec{z}$ is normalised we have $\sum_ip_i=\Vert\vec{z}\Vert^2=1$.

In the following we are interested in the shape of distribution $\bm{p}'$ which obtains from $\bm{p}$  as result of transformation via parallel configuration of gates $\mathcal{F}$, $\mathcal{D}$, $\mathcal{S}$, $\mathcal{B}$. For the proof of Theorem~\ref{Theorem-Matrix} we will find convex decomposition of $\bm{p}'$ into a mixture of distributions $\bm{p}_i'\in[\,\vec{z}\,'\,]_i$ and then compare it with Definition~\ref{Definition} Eq.\,(\ref{[z]}).

We begin with two simple observations about distribution $\bm{p}$ in Eq.\,(\ref{p}). Since distributions $\bm{p}_i$ have disjoint supports, see Eq.\,(\ref{ij-disjoint}), and $q=i$ only for the ontic state $(q,\vec{u},\vec{\tau})$ in support of $\bm{p}_i$\,, then we have
\begin{eqnarray}\label{prob-q=i}
&&\textit{\textsf{Pr}}\,(q=i)=p_i\,,\\\label{prob|q=i}
&&\textit{\textsf{Pr}}\,(q,\vec{u},\vec{\tau}|q=i)=\bm{p}_i(q,\vec{u},\vec{\tau})\,.
\end{eqnarray}
Furthermore, the sum in Eq.\,(\ref{p}) can be split into four groups by collecting together the terms associated with the same kind of gate attached to the relevant path, i.e.
\begin{eqnarray}\label{p-split}
\bm{p}=\sum_{l\in\mathcal{F}}p_l\,\bm{p}_l+\sum_{j\in\mathcal{D}}p_j\,\bm{p}_j+\sum_{k\in\mathcal{S}}p_k\,\bm{p}_k+\sum_{r\in\bigcup\mathcal{B}}p_r\,\bm{p}_r\,,\ \ \ \ 
\end{eqnarray}
since $\mathcal{F}$, $\mathcal{D}$, $\mathcal{S}$, $\bigcup\mathcal{B}$ partition the set of paths $\{1,\dots,N\}$.

Throughout the proof we will use the following auxiliary notation
\begin{eqnarray}\label{w}
\vec{w}':=\prod_{j\in\mathcal{D}}(\mathbb{1}-\mathbb{P}_j)\prod_{k\in\mathcal{S}}\mathbb{S}_k\!\prod_{\{s,t\}\in\mathcal{B}}\!\mathbb{B}_{st}\ \,\vec{z}\,,
\end{eqnarray}
which relates to vector $\vec{z}\,'$ in Eqs.~(\ref{Theorem-Matrix-Negative}) and (\ref{lemma-z'}) as follows
\begin{eqnarray}\label{wz}
\vec{z}\,'=\nicefrac{\vec{w}'}{\Vert\vec{w}'\Vert}\,.
\end{eqnarray}
Recall that in both Theorem~\ref{Theorem-Matrix} and Lemma~\ref{Lemma} vector $\vec{z}\,'$ is normalised.$\,^{\ref{Footnote-Norm}}$

\vspace{0.2cm} 
$\bullet$ {\it First part -- Eqs.~(\ref{Theorem-Matrix-Prob}) and (\ref{Theorem-Matrix-Dj}).}\ \vspace{0.1cm}

Form the fact that detectors react only to particles in a given path we get that detector $D_j$ '\textsc{Clicks}' if and only if $q=j$, which form Eqs.~(\ref{pp}) and (\ref{prob-q=i}) happens with probability
\begin{eqnarray}
\textit{\textsf{\,Pr}}\,(D_j|\vec{z})=\textit{\textsf{\,Pr}}\,(q=j)=|z_j|^2\,.
\end{eqnarray}
Note that because the system is always in a well-defined ontic state (which means that position of the particle is fixed) simultaneous detection in different detectors at the same time (i.e. more than one '\textsc{Click}') is impossible. Moreover, negative outcome in all detectors '\textsc{NoClick}' occurs only if $q\notin\mathcal{D}$, which happens with probability $1-\sum_{j\in\mathcal{D}}\textit{\textsf{\,Pr}}\,(q=j)=1-\sum_{j\in\mathcal{D}}|z_j|^2$.

To conclude, we observe that '\textsc{Click}' in detector $D_j$ provides additional knowledge that $q=j$. It entails update of the initial probability distribution $\bm{p}$ to the conditional distribution $\bm{p}_j\in[\,\vec{z}\,]_j$ given in Eq.\,(\ref{prob|q=i}). Then from Lemma~\ref{Lemma} Eq.\,(\ref{Lemma-Di}) we infer that after detection (post-selection on a '\textsc{Click}' in detector $D_j$) the system is described by a distribution $\bm{p}'\in[\,\bm{e}_j\,]_j$. Since in this case from Definition~\ref{Definition} Eq.\,(\ref{[z]}) we have $[\,\bm{e}_j\,]_j=[\,\bm{e}_j\,]$, then
\begin{eqnarray}
[\,\vec{z}\,]\ni\bm{p}\ \xymatrix{\ar[r]^{D_j} &}\ \bm{p}'\in[\,\bm{e}_j\,]&\ \ &('\textsc{Click}')\,
\end{eqnarray}
It proves the first part of Theorem~\ref{Theorem-Matrix} Eqs.~(\ref{Theorem-Matrix-Prob}) and (\ref{Theorem-Matrix-Dj}).

\vspace{0.2cm} 
$\bullet$ {\it Second part -- Eq.\,(\ref{Theorem-Matrix-Negative}).}\ \vspace{0.1cm}

Now we consider the case when none of the detectors '\textsc{Clicks}' (i.e. $q\notin\mathcal{D}$) or there is no detectors at all (i.e. $\mathcal{D}=\varnothing$). Observe that since distributions $\bm{p}_i\in[\,\vec{z}\,]_i$ in Eq.\,(\ref{p-split}) are supported in $\Lambda_{\vec{z}}^i$ for which $q=i$, then additional knowledge of $q\notin\mathcal{D}$ entails update of the initial probability distribution $\bm{p}$ to the following form
\begin{eqnarray}\label{p-split-2}
\bm{p}\ \longrightarrow\ \tilde{\bm{p}}&=&\sum_{l\in\mathcal{F}}\tilde{p}_l\,\bm{p}_l+\sum_{k\in\mathcal{S}}\tilde{p}_k\,\bm{p}_k+\sum_{r\in\bigcup\mathcal{B}}\tilde{p}_r\,\bm{p}_r\,,\ \ \ \ \ \ \ \ 
\end{eqnarray}
with renormalised coefficients
\begin{eqnarray}\label{p-tilde}
\tilde{p}_i=\tfrac{p_i}{\sum_{j\notin\mathcal{D}}p_j}\stackrel{(\ref{pp})}{=}\tfrac{|z_i|^2}{\sum_{j\notin\mathcal{D}}|z_j|^2}\,,
\end{eqnarray}
where $i\in\mathcal{F}\cup\mathcal{S}\cup\bigcup\mathcal{B}$. For further convenience let us rewrite the last sum more explicitly
\begin{eqnarray}\label{p-split-tilde}
\tilde{\bm{p}}&=&\sum_{l\in\mathcal{F}}\tilde{p}_l\,\bm{p}_l+\sum_{k\in\mathcal{S}}\tilde{p}_k\,\bm{p}_k+\!\!\!\!\sum_{\{s,t\}\in\mathcal{B}}\big(\tilde{p}_s\,\bm{p}_s+\tilde{p}_t\,\bm{p}_t\big)\,.\ \ \ \ \ \ \ \
\end{eqnarray}

In the following we are interested in the transformation of $\tilde{\bm{p}}$ under action of the parallel configuration of gates. Since supports of distributions $\bm{p}_i$ are disjoint, we can individually analyse each term in Eq.\,(\ref{p-split-tilde}) and then collect together all the results.

From Lemma~\ref{Lemma} Eq.\,(\ref{zz}) for $l\in\mathcal{F}$ and $k\in\mathcal{S}$ we get
\begin{eqnarray}\label{proof-F-S}
\begin{aligned}
\tilde{p}_l\,\bm{p}_l&\ \longrightarrow\ \tilde{p}_l\,\bm{p}_l'\,,\\
\tilde{p}_k\,\bm{p}_k&\ \longrightarrow\ \tilde{p}_k\,\bm{p}_k'\,,
\end{aligned}
\end{eqnarray}
where $\bm{p}_l'\in[\,\vec{z}\,']_l$ and $\bm{p}_k'\in[\,\vec{z}\,']_k$ with label $\vec{z}\,'$ given in Eq.\,(\ref{lemma-z'}). As for the coefficients $\tilde{p}_i$ defined in Eq.\,(\ref{p-tilde}), we observe that
\begin{eqnarray}\label{||z||}
\begin{aligned}
\!\!\sum_{j\notin\mathcal{D}}|z_j|^2&=\Big\Vert\prod_{j\in\mathcal{D}}(1-\mathbb{P}_j)\,\vec{z}\,\Big\Vert^2\\
&=\Big\Vert\prod_{k\in\mathcal{S}}\mathbb{S}_k\!\!\prod_{\{s,t\}\in\mathcal{B}}\!\!\mathbb{B}_{st}\,\prod_{j\in\mathcal{D}}(\mathbb{1}-\mathbb{P}_j)\,\vec{z}\,\Big\Vert^2\stackrel{(\ref{w})}{=}\Vert\vec{w}'\Vert^2\,,
\end{aligned}
\end{eqnarray}
where the first equality comes from definition of projectors $\mathbb{P}_j$, the second one is due to preservation of norm under $\prod_{k\in\mathcal{S}}\mathbb{S}_k\prod_{\{s,t\}\in\mathcal{B}}\mathbb{B}_{st}$ which is a unitary transformation, and the third draws on commutativity of all factors in the product. Now, from the fact that the product in Eq.\,(\ref{w}) consists of factors acting separately in the respective components, we have
\begin{eqnarray}\label{z-z}
\begin{aligned}
&\ \ \,z_l\,=\,w_l'\ \ \ \ \ \ \,\text{for}\ \ l\in\mathcal{F}\,,\\
&|z_k|\,=\,|w_k'|\ \ \ \ \text{for}\ \ k\in\mathcal{S}\,.
\end{aligned}
\end{eqnarray}
Substitution of Eqs.~(\ref{||z||}) and (\ref{z-z}) into Eq.\,(\ref{p-tilde}) gives
\begin{eqnarray}\label{proof-first-p}
\begin{aligned}
&\tilde{p}_l=\tfrac{|w_l'|^2}{\Vert\vec{w}'\Vert^2}\stackrel{(\ref{wz})}{=}|z_l'|^2=: p_l'\,,
\\
&\tilde{p}_k=\tfrac{|w_k'|^2}{\Vert\vec{w}'\Vert^2}\stackrel{(\ref{wz})}{=}|z_k'|^2=: p_k'\,.
\end{aligned}
\end{eqnarray}
Therefore, for the first two sums in Eq.\,(\ref{p-split-tilde}) we get
\begin{eqnarray}\label{proof-first}
\sum_{l\in\mathcal{F}}\tilde{p}_l\,\bm{p}_l+\sum_{k\in\mathcal{S}}\tilde{p}_k\,\bm{p}_k\ \longrightarrow\ \sum_{l\in\mathcal{F}}p_l'\,\bm{p}_l'+\sum_{k\in\mathcal{S}}p_k'\,\bm{p}_k'\,.\ \ \ \ \ 
\end{eqnarray}

Now, we proceed to the analysis of terms in the last sum in Eq.\,(\ref{p-split-tilde}). From Lemma~\ref{Lemma} Eq.\,(\ref{zz-BS}) we obtain (note that $\tilde{p}_s+\tilde{p}_t\neq0$ entails $|z_s'|^2+|z_t'|^2\neq0$)
\begin{eqnarray}\label{proof-BS}
\begin{aligned}
\tilde{p}_s\,\bm{p}_s+\tilde{p}_t\,\bm{p}_t\ \longrightarrow\ \ &\tilde{p}_s\Big(\tfrac{|z_s'|^2}{|z_s'|^2+|z_t'|^2}\,\bm{q}_s'+\tfrac{|z_t'|^2}{|z_s'|^2+|z_t'|^2}\,\bm{q}_t'\Big)+\\
&\tilde{p}_t\Big(\tfrac{|z_s'|^2}{|z_s'|^2+|z_t'|^2}\,\bm{q}_s''+\tfrac{|z_t'|^2}{|z_s'|^2+|z_t'|^2}\,\bm{q}_t''\Big)\ ,\ \ \ 
\end{aligned}
\end{eqnarray}
with $\bm{q}_s',\bm{q}_s''\in[\,\vec{z}\,']_s$ and $\bm{q}_t',\bm{q}_t''\in[\,\vec{z}\,']_t$ and the label $\vec{z}\,'$ given by formula Eq.\,(\ref{lemma-z'}). By regrouping terms on the right hand side we get
\begin{eqnarray}
\tfrac{|z_s'|^2}{|z_s'|^2+|z_t'|^2}\big(\tilde{p}_s\bm{q}_s'+\tilde{p}_t\bm{q}_s''\big)+\tfrac{|z_t'|^2}{|z_s'|^2+|z_t'|^2}\big(\tilde{p}_s\bm{q}_t'+\tilde{p}_t\bm{q}_t''\big)\,,\ \ \ \ \end{eqnarray}
and observe that 
\begin{eqnarray}
\bm{p}_s':=\tfrac{\tilde{p}_s\bm{q}_s'+\tilde{p}_t\bm{q}_s''}{\tilde{p}_s+\tilde{p}_t}&\ \ \ \text{and}\ \ \ &\bm{p}_t':=\tfrac{\tilde{p}_s\bm{q}_t'+\tilde{p}_t\bm{q}_t''}{\tilde{p}_s+\tilde{p}_t}
\end{eqnarray}
are properly normalised distributions with the property that $\bm{p}_s'\in[\,\vec{z}\,']_s$ and $\bm{p}_t'\in[\,\vec{z}\,']_t$. Therefore Eq.\,(\ref{proof-BS}) can be rewritten in the form
\begin{eqnarray}\label{proof-BS-z}
\tilde{p}_s\,\bm{p}_s+\tilde{p}_t\,\bm{p}_t&\longrightarrow&\tfrac{|z_s'|^2(\tilde{p}_s+\tilde{p}_t)}{|z_s'|^2+|z_t'|^2}\,\bm{p}_s'+\tfrac{|z_t'|^2(\tilde{p}_s+\tilde{p}_t)}{|z_s'|^2+|z_t'|^2}\,\bm{p}_t'\,.\ \ \ \ \ \ \ \ \ 
\end{eqnarray}
A closer look at the first coefficient reveals that 
\begin{eqnarray}\label{proof-last-p}
\begin{aligned}
&\!\!\!\!\!\!\!\!\!\!\tfrac{|z_s'|^2(\tilde{p}_s+\tilde{p}_t)}{|z_s'|^2+|z_t'|^2}\stackrel{(\ref{p-tilde})}{=}\tfrac{|z_s'|^2}{\sum_{j\notin\mathcal{D}}|z_j|^2}\cdot\tfrac{|z_s|^2+|z_t|^2}{|z_s'|^2+|z_t'|^2}\\
&\!\!\!\!\!\!\!\!\!\!\stackrel{(
\ref{wz})(\ref{||z||})}{=}\tfrac{|w_s'|^2}{\Vert\vec{w}'\Vert^2}\cdot\tfrac{|z_s|^2+|z_t|^2}{|w_s'|^2+|w_t'|^2}=\tfrac{|w_s'|^2}{\Vert\vec{w}'\Vert^2}\stackrel{(
\ref{wz})}{=}|z_s'|^2=:p_s'\,,
\end{aligned}
\end{eqnarray}
where in the penultimate equality the last fraction cancels out. The latter is due to the fact that  the only nontrivial action on components $\{s,t\}$ in Eq.\,(\ref{w}) comes from the matrix $\mathbb{B}_{st}$ which gives
\begin{eqnarray}
\begin{pmatrix}w_s'\\w_t'\end{pmatrix}=\begin{pmatrix}i\sqrt{R}&\sqrt{T}\\\sqrt{T}&i\sqrt{R}\end{pmatrix}\begin{pmatrix}z_s\\z_t\end{pmatrix}\,,
\end{eqnarray}
and since it is a unitary transform it preserves the norm $|w_s'|^2+|w_t'|^2=|z_s|^2+|z_t|^2$. Clearly, the same reasoning applies to the second term in Eq.\,(\ref{proof-BS-z}) which equals to $|z_t'|^2=:p_t'$. Hence for the last sum in Eq.\,(\ref{p-split-tilde}) we get
\begin{eqnarray}\label{proof-last}
\sum_{\{s,t\}\in\mathcal{B}}\big(\tilde{p}_s\,\bm{p}_s+\tilde{p}_t\,\bm{p}_t\big)\ \longrightarrow\sum_{\{s,t\}\in\mathcal{B}}\big(p_s'\,\bm{p}_s'+p_t'\,\bm{p}_t'\big)\,.\ \ \ \ \ 
\end{eqnarray}

Having analysed all terms in Eq.\,(\ref{p-split-tilde}) we use results in Eqs.~(\ref{proof-first-p}), (\ref{proof-first}), (\ref{proof-last-p}) and (\ref{proof-last}) which together with Eq.\,(\ref{p-split-2}) provide the following decomposition
\begin{eqnarray}
{\bm{p}}\ \longrightarrow\ \bm{p}'&=&\sum_{i=1}^Np_i'\,\bm{p}_i'\,,
\end{eqnarray}
where $\bm{p}_i'\in[\,\vec{z}\,']_i$ and $p_i'=|z_i'|^2$ with vector $\vec{z}\,'$ given  in Eq.\,(\ref{lemma-z'}). (Clearly, for $i\in\mathcal{D}$ we have $p_i'=|z_i'|^2=0$.)
Comparing with Definition~\ref{Definition} Eq.\,(\ref{[z]}), we conclude that
\begin{eqnarray}
[\,\vec{z}\,]\ni\bm{p}\ \xymatrix{\ar[r] &}\ \bm{p}'\in[\,\vec{z}\,']\,,
\end{eqnarray}
which proves the second part of Theorem~\ref{Theorem-Matrix} Eq.\,(\ref{Theorem-Matrix-Negative}).

\end{proof}

\end{document}